\begin{document}

\title{Finite Guarding of Weakly Visible Segments via Line Aspect Ratio in Simple Polygons}

\author{Arash~Vaezi, 
\thanks{Department of Computer Science, Institute for Research in Fundamental Sciences (IPM), e-mail: avaezi@sharif.edu; avaezi@ipm.ir}
}
\maketitle


\newtheorem{theorem}{Theorem}
\newtheorem{assumption}{Assumption}
\newtheorem{lemma}{Lemma}
\newtheorem{obs}{Observation}
\newtheorem{definition}{Definition}
\newtheorem{remark}{Remark}

\def\for{\mbox{\it for}}
\def\rf{\mbox{\it rf}}
\def\AS{\mbox{\it AS}}
\def\VP{\mbox{\it VP}}
\def\Vp{\mbox{\it Vp}}
\def\VL{\mbox{\it VL}}
\def\scVL{\mbox{\it scVL}}
\def\VA{\mbox{\it VA}}
\def\TP{\mbox{\it TP}}
\def\TI{\mbox{\it TI}}
\def\NP{\mbox{\it NP}}
\def\UC{\mbox{\it UC}}
\def\GR{\mbox{\it GR}}
\def\PR{\mbox{\it PR}}
\def\UB{\mbox{\it UB}}
\def\WVP{\mbox{\it WVP}}
\def\SVL{\mbox{\it SVL}}
\def\SVP{\mbox{\it SVP}}
\def\LBV{\mbox{\it LBV}}
\def\RBV{\mbox{\it RBV}}
\def\SGR{\mbox{\it SGR}}
\def\CVP{\mbox{\it CVP}}
\def\OPT{\mbox{\it OPT}}
\def\SCR{\mbox{\it SCR}}
\def\SSCR{\mbox{\it SSCR}}
\def\SR{\mbox{\it SR}}
\def\UWVP{\mbox{\it UWVP}}
\def\UVL{\mbox{\it UVL}}
\def\PVL{\mbox{\it PVL}}
\def\PVP{\mbox{\it PVP}}
\def\UWVP{\mbox{\it UWVP}}
\def\EdP{\mbox{\it EdP}}
\def\EdI{\mbox{\it EdI}}
\def\Findtsr{\mbox{\it Findtsr}}
\def\Decomposescr{\mbox{\it Decomposescr}}
\def\inPL{\mbox{\it inPL}}
\def\seg#1{\overline{#1}}
\def\P{\mathcal P}
\def\T{\mathcal T}
\def\S{\mathcal S}
\def\F{\mathcal F}
\def\M{\mathcal M}
\def\G{\mathcal G}
\def\D{\mathcal D}

\begin{abstract}

We address the problem of covering a target segment $\overline{uv}$ using a finite set of guards $\mathcal{S}$ placed on a source segment $\overline{xy}$ within a simple polygon $\mathcal{P}$, assuming weak visibility between the target and source. Without geometric constraints, $\mathcal{S}$ may be infinite, as shown by prior hardness results. To overcome this, we introduce the {\it line aspect ratio} (AR), defined as the ratio of the \emph{long width} (LW) to the \emph{short width} (SW) of $\mathcal{P}$. These widths are determined by parallel lines tangent to convex vertices outside $\mathcal{P}$ (LW) and reflex vertices inside $\mathcal{P}$ (SW), respectively.

Under the assumption that AR is constant or polynomial in $n$ (the polygon's complexity), we prove that a finite guard set $\mathcal{S}$ always exists, with size bounded by $\mathcal{O}(\text{AR})$. This AR-based framework generalizes some previous assumptions, encompassing a broader class of polygons.

Our result establishes a framework guaranteeing finite solutions for segment guarding under practical and intuitive geometric constraints.

\end{abstract}

\begin{IEEEkeywords}
Simple polygon, segment's visibility, line aspect ratio, weakly visible.
\end{IEEEkeywords}

\newpage

\IEEEpeerreviewmaketitle

\section{Introduction}
\label{sec:introduction}
Let $\mathcal{P}$ be a simple polygon, and let $\text{int}(\mathcal{P})$ denote its interior. Two points $x$ and $y$ in $\mathcal{P}$ are \emph{visible} to each other if and only if the relatively open line segment $\overline{xy}$ lies entirely within $\text{int}(\mathcal{P})$. The \emph{visibility polygon} of a point $q \in \mathcal{P}$, denoted by $\text{VP}(q)$, is the set of all points in $\mathcal{P}$ visible to $q$. 

The \emph{weak visibility polygon} of a line segment $\overline{pq}$, denoted $\text{WVP}(\overline{pq})$, is the maximal sub-polygon of $\mathcal{P}$ visible to at least one interior point of $\overline{pq}$. A polygon $\mathcal{P}$ is \emph{completely visible} from a segment $\overline{pq}$ (denoted $\text{CVP}$) if every point $z \in \mathcal{P}$ is visible from every point $w \in \overline{pq}$. Algorithms to compute $\text{WVP}$ and $\text{CVP}$ in linear time are known \cite{avis,2}.

\section{Literature Review}
\label{sec:lit-review}

Avis et. al. \cite{avis1981optimal} developed a linear-time method for computing weak visibility polygons, enabling the characterization of regions visible from a given segment.
Similarly, Guibas et. al. \cite{guibas1987visibility} introduced trapezoidal decomposition methods to support efficient segment-to-segment visibility queries. These techniques are crucial for testing the visibility conditions in different scenarios. Complementing these, Hershberger et. al. \cite{hershberger1989optimal} proposed an optimal algorithm for visibility graph construction, effectively determining the visibility relationships between points on two segments.

The combinatorial conditions for mutual visibility between segments have been extensively studied. Sack et. al. \cite{sack1991weak} established necessary and sufficient conditions for weak mutual visibility, providing a duality-based framework that explains cases where partial visibility permits finite covering sets. Later, Ghodsi et. al. \cite{ghodsa2008weak} developed decision algorithms for testing weak visibility between disjoint segments, showing that determining whether $\overline{uv} \subseteq \text{WVP}(\overline{xy})$ is solvable in $\mathcal{O}(n \log n)$ time.

Guard placement optimization has been explored in related contexts. T{\'o}th \cite{toth2010art} demonstrated that $\lfloor n/4 \rfloor$ edge guards suffice to cover simple polygons. King et. al. \cite{king2004guarding} investigated mobile guards along segments. Additionally, Bhattacharya et. al. \cite{bhattacharya2015approximability} provided an $\mathcal{O}(\log n)$-approximation for guarding weakly visible polygons.

The study of visibility within simple polygons has advanced significantly, particularly with specialized cases like sliding cameras, reflections, and structured segment visibility. Biedl et al. explored sliding-camera guards, providing approximation algorithms and demonstrating the NP-hardness of sliding-camera coverage in polygons with holes \cite{biedl2017sliding}. Vaezi et. al. addressed reflection-extended visibility, where polygon edges act as mirrors, enabling previously invisible segments to become visible; their work covers weak, strong, and complete visibility settings \cite{vaeziwalcom,tcs}. Lee and Chwa \cite{lee1990chain} focused on chain visibility, investigating the visibility of polygonal chains and providing efficient algorithms for both convex and reflex chains. Recent research has introduced k-transmitters, extending visibility to cases where light rays may cross polygon boundaries multiple times \cite{algowin2023transmitters}. Furthermore, structured visibility profiles and efficient data structures for segment-to-segment queries have been studied extensively, offering solutions for visibility tracking and analysis in dynamic environments \cite{chen1998structured}.

The inherent difficulty of unrestricted guarding has motivated the introduction of geometric constraints. Bonnet et. al. \cite{bonnet2016approximating} proved the APX-hardness of guarding problems, highlighting the necessity of assumptions like their vision stability or our line aspect ratio condition. Notably, their results demonstrate that without such constraints, the set of guards $\mathcal{S}$ may be infinite—a key motivation for our approach. Unlike previous work relying on integer-coordinate assumptions, our framework accommodates real-coordinate polygons with exponential complexity, generalizing these results.

\subsection{Positioning of Our Contribution}

Existing research has laid a strong foundation in:
\begin{itemize}
    \item 
 Efficient computation of weak visibility polygons \cite{avis1981optimal}
   \item  Visibility testing between segments \cite{ghodsa2008weak}
   \item Hardness results for general guarding problems \cite{bonnet2016approximating}
\end{itemize}
Our work extends this body of knowledge by providing:

\begin{itemize}
\item A guarantee of finite guard sets $\mathcal{S}$ for segment coverage under the line aspect ratio assumption
\item  Generalization to polygons excluded by prior vision stability assumptions (Theorem~\ref{thm:AR-generalizes-BM})
\item  Explicit bounds on guard set size, $|\mathcal{S}| = \mathcal{O}(\text{AR})$ (Theorem~\ref{thm:finite-S})
\end{itemize}

\section{Problem Definition}
\label{sec:problem}
Consider two line segments: a \emph{target} segment $\overline{uv}$ and a \emph{source} segment $\overline{xy}$. The visibility between these segments may fall into one of three cases:
\begin{enumerate}
    \item $\overline{uv}$ and $\overline{xy}$ are completely visible ($\text{CVP}$).
    \item At least one point of $\overline{xy}$ or $\overline{uv}$ is invisible to the other segment.
    \item $\overline{uv}$ and $\overline{xy}$ are partially visible (i.e., every point on one segment is visible to at least one point on the other, but not necessarily all points). From now on we refer to this case as the target is \textit{weakly visible} from the source.
\end{enumerate}

This work focuses on the third case. We aim to find a finite and polynomial set $\mathcal{S}$ of points on the source segment $\overline{xy}$ such that:
\[
\overline{uv} \subseteq \bigcup_{s \in \mathcal{S}} \text{VP}(s)
\]
where $\text{VP}(s)$ denotes the visibility polygon of $s$. We assume $\overline{uv}$ is weakly visible from $\overline{xy}$.

\section{Assumptions}
\label{sec:assumptions}
To ensure a finite size for $\mathcal{S}$, certain assumptions about $\mathcal{P}$ are necessary. This section introduces these assumptions. Specifically, we present an algorithm for determining the points of $\mathcal{S}$ and demonstrate that, under Assumption \ref{assumme:AR1}, the algorithm yields a finite set $\mathcal{S}$ whose size is polynomial in $n$.

\begin{assumption}[Line Aspect Ratio (Our assumption)]
\label{assumme:AR1}
For a simple polygon $\mathcal{P}$, the \emph{long width} ($LW$) is the maximum distance between two parallel lines tangent to convex vertices of $\mathcal{P}$, on the outside of $\mathcal{P}$ without intersecting its interior. The \emph{short width} ($SW$) is the minimum distance between two such parallel line segments tangent to the reflex vertices in the interior of $\P$ and constrained by the polygon's boundary. The \emph{line aspect ratio} is:
\[
\text{AR}_{\text{line}} = \frac{LW}{SW}
\]
One may consider two cases:
\begin{itemize}
    \item \emph{Constant} line aspect ratio: $\text{AR}_{\text{line}} = O(1)$
    \item \emph{Polynomial} line aspect ratio: $\text{AR}_{\text{line}} = \text{poly}(n)$
\end{itemize}
where $n$ is the complexity of $\mathcal{P}$.
\end{assumption}

\begin{assumption}[Disk Aspect Ratio]
\label{assume:AR2}
For a simple polygon $\mathcal{P}$, the \emph{long diameter} ($LD$) is the diameter of the smallest enclosing circle tangent to the boundary. The \emph{short diameter} ($SD$) is the diameter of the largest inscribed circle tangent to the boundary. The \emph{disk aspect ratio} is:
$$
\text{AR}_{\text{disk}} = \frac{LD}{SD}
$$
One may consider:
\begin{itemize}
    \item \emph{Constant} disk aspect ratio: $\text{AR}_{\text{disk}} = O(1)$
    \item \emph{Polynomial} disk aspect ratio: $\text{AR}_{\text{disk}} = \text{poly}(n)$
\end{itemize}
\end{assumption}

\begin{definition}[General Position \cite{BonnetMiltzow2016}]
\label{def:general_position}
A simple polygon $\mathcal{P}$ satisfies the \emph{general position} assumption if no three distinct vertices are collinear.
\end{definition}

\begin{definition}[Vision Stability \cite{BonnetMiltzow2016}]
\label{def:vision_stability}
$\mathcal{P}$ satisfies \emph{vision stability} if there exists $\gamma = \gamma(\mathcal{P}) > 0$ such that for any $p, q \in \mathcal{P}$:

$$
\mu\left( \VP(p) \triangle \VP(q) \right) \leq \gamma \cdot \|p - q\|
$$
where:
\begin{itemize}
  \item $\VP(x)$ is the visibility polygon of $x$
  \item $\triangle$ denotes symmetric difference
  \item $\mu$ is the Lebesgue measure
  \item $\|\cdot\|$ is the Euclidean norm
\end{itemize}
\end{definition}

\begin{assumption}[Vision Stability]
\label{assume:vision-stability}
A simple polygon $\mathcal{P}$ satisfies:
\begin{enumerate}
  \item General position (Definition~\ref{def:general_position})
  \item Vision stability (Definition~\ref{def:vision_stability})
\end{enumerate}
\end{assumption}

\subsection{Connection to Art Gallery Problems}

Our defined problem is inspired by the Art Gallery Problem, which seeks the minimum number of guards (points) required to observe every interior point of a polygon $\mathcal{P}$ with $n$ vertices. Bonnet and Miltzow \cite{cite:EdouardTilmann} highlighted fundamental limitations regarding visibility coverage:

\textit{Lemma 4 of \cite{cite:EdouardTilmann}}: For any finite set $D$ within $\mathcal{P}$, there exists a point $x \in \mathcal{P}$ such that $x$ can see a point $p$ invisible to all $d \in D$ within distance 1 of $x$.

\textit{Lemma 5 of \cite{cite:EdouardTilmann}}: For any constant $c \in \mathbb{N}$, there exists a polygon $\mathcal{P}_c$ where, for any finite set $D$, some $x \in \mathcal{P}_c$ has a visibility region that cannot be fully covered by any $c-1$ points in $D$.

These results demonstrate that finite sets of source points cannot always guarantee complete visibility in general polygons.

The assumptions of Bonnet-Miltzow \cite{BonnetMiltzow2020}] (BM)

A simple polygon $\mathcal{P}$ satisfies:
\begin{enumerate}
    \item 
General position (Definition~\ref{def:general_position}).
 \item Integer coordinates: Vertices are positioned at integer coordinate points.
 \end{enumerate}

\subsection{Mildness of Assumption \ref{assumme:AR1}}

Our line aspect ratio assumption (Assumption~\ref{assumme:AR1}) encompasses a broad class of polygons, because it applies to:

\begin{enumerate}
    \item 
Polygons with real-coordinate vertices.
 \item  
Scenarios where vision stability does not necessarily hold.
\end{enumerate}

While Assumption~\ref{assume:AR2} (disk aspect ratio) fails to guarantee a finite set of source points $\mathcal{S}$ (Observation \ref{obs:ass2-infinite}), Assumption~\ref{assumme:AR1} (line aspect ratio) enables finite solutions. Lemma 4 and Lemma 5 from \cite{cite:EdouardTilmann} show that finite point sets cannot cover arbitrary visibility regions. However, our aspect ratio constraint overcomes these limitations, as formalized in Theorem \ref{thm:AR-generalizes-BM}.

\subsubsection{Complementarity of Line Aspect Ratio and Bonnet-Miltzow Frameworks}

The line aspect ratio (AR) and Bonnet-Miltzow (BM) frameworks complement each other:

\begin{enumerate}
    \item 
 There exist polygons that satisfy BM assumptions but have unbounded AR (Fig. \ref{fig:Bm-NotBM}).
    \item 
    There exist polygons with bounded AR that violate BM assumptions (e.g., polygons with bounded-AR real-coordinate vertices).
\end{enumerate}

\begin{figure}[tp]
\centering
\includegraphics[scale=0.9]{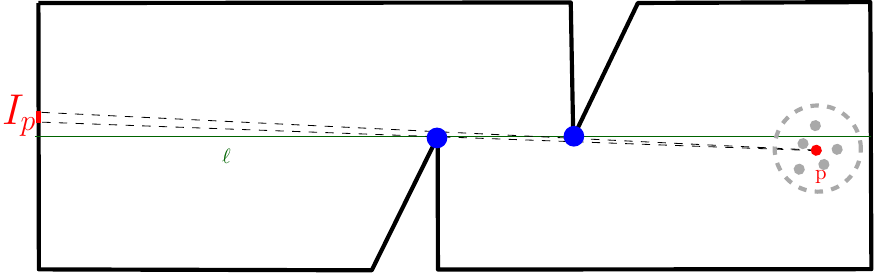}
\caption{A polygon satisfying BM's assumptions (general position and integer coordinates) but not satisfying line aspect ratio.}
\label{fig:Bm-NotBM}
\end{figure}

For consistency, we use $\text{AR} = \text{AR}_{\text{line}}$ to denote the line aspect ratio in subsequent discussions.

\section{Our Contribution}
\label{sec:contribution}
\begin{theorem}
\label{thm:finite-S}
Under the line aspect ratio assumption (Definition~\ref{assumme:AR1}), there exists a finite set $\mathcal{S}$ on $\overline{xy}$ such that:

$$|\mathcal{S}| \text{ is bounded by } \text{AR}$$
$$
\overline{uv} \subseteq \bigcup_{s \in \mathcal{S}} \text{VP}(s)
$$
\end{theorem}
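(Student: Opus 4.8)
The plan is to convert the geometric aspect‑ratio constraint into an upper bound on the number of "visibility events" along the target segment $\overline{uv}$, and then to place one guard on $\overline{xy}$ to handle each event interval. First I would set up the standard machinery for weak visibility: as a point $s$ slides along $\overline{xy}$, the visibility polygon $\VP(s)$ changes combinatorially only at a discrete set of critical placements, where $s$ becomes aligned with two reflex vertices (or a reflex vertex and a polygon edge) of $\P$. Each maximal sub‑interval of $\overline{xy}$ between consecutive critical placements sees a fixed "window structure" on $\overline{uv}$; dually, $\overline{uv}$ is partitioned into sub‑segments $I_1,\dots,I_k$ such that each $I_j$ is fully visible from some single point $s_j\in\overline{xy}$ (this is where weak visibility is used — every point of $\overline{uv}$ is seen by \emph{some} point of $\overline{xy}$, so the $I_j$ cover $\overline{uv}$). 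Choosing one such $s_j$ per piece yields a guard set $\mathcal S=\{s_1,\dots,s_k\}$ with $\overline{uv}\subseteq\bigcup_j \VP(s_j)$, so the entire content of the theorem is the bound $k=\mathcal O(\mathrm{AR})$.

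The core step is bounding $k$. I would argue that each breakpoint between $I_j$ and $I_{j+1}$ on $\overline{uv}$ is "caused" by a pocket of $\P$ — a sub‑polygon cut off from $\overline{xy}$ by a window edge through a reflex vertex — that blocks visibility for a positive‑length stretch of $\overline{uv}$. The key geometric claim is that any such obstructing pocket has width at least $SW$ in the direction separating the two neighbouring visible pieces: the window edge is supported by a reflex vertex, and by the definition of $SW$ as the minimum over parallel tangent lines to reflex vertices constrained by the boundary, the "throat" of every pocket is at least $SW$ wide. Meanwhile all of these pockets, together with the visible strips they separate, are laid out along (a constant number of directions relative to) $\overline{uv}$, whose length together with the rest of $\P$ is bounded above by $LW$, since $LW$ is the largest separation between outer parallel supporting lines and hence an upper bound on the diameter of $\P$. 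Packing disjoint (or boundedly‑overlapping) features each of width $\ge SW$ into an extent of size $\mathcal O(LW)$ forces $k = \mathcal O(LW/SW) = \mathcal O(\mathrm{AR})$, which is exactly the claimed bound; the finiteness and polynomiality then follow from Assumption~\ref{assumme:AR1}.

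I would organize the write‑up as: (i) the slide/critical‑placement lemma partitioning $\overline{xy}$ and dually $\overline{uv}$; (ii) the pocket lemma, identifying each breakpoint with an obstructing pocket whose throat has width $\ge SW$; (iii) the packing argument against $LW$; (iv) assembling $\mathcal S$ and invoking weak visibility for coverage. The main obstacle I expect is step (ii)–(iii): making rigorous the claim that distinct breakpoints correspond to pockets that are "geometrically separated" enough that their $SW$‑widths cannot overlap — in principle many reflex vertices could cluster, and I must show that when two pockets genuinely produce two \emph{distinct} visible pieces of $\overline{uv}$ with an invisible gap between them, the associated throats occupy disjoint portions of a line transversal of $\P$, so the $LW$ budget is spent at rate $\ge SW$ per breakpoint. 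A secondary subtlety is handling breakpoints at endpoints of $\overline{uv}$ and degenerate alignments, which the general‑position‑style genericity (or a symbolic perturbation) absorbs without affecting the asymptotic count.
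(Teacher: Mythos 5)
Your overall accounting scheme --- decompose $\overline{uv}$ into intervals each handled by one guard on $\overline{xy}$, charge at least $SW$ of geometric ``budget'' to each interval/breakpoint, and divide into a total extent of at most $LW$ --- is the same high-level strategy as the paper, which runs an explicit slicing algorithm (alternately projecting through the left/right blocking reflex vertices $\LBV$ and $\RBV$ between source and target) and then argues that each generated point's visible interval on the target has length at least $SW$ while $|\overline{uv}|\leq LW$. However, your proposal has a genuine gap exactly where you place the key geometric content. You charge $SW$ to the ``throat'' of a single obstructing pocket, claiming it is at least $SW$ wide ``by the definition of $SW$.'' That does not follow: $SW$ is the minimum distance between \emph{two parallel lines tangent to reflex vertices}, so it naturally lower-bounds the separation of a \emph{pair} of reflex vertices pinching a sightline from opposite sides (this is precisely the $\LBV_q$/$\RBV_q$ pair the paper uses in Observation~\ref{obs:I-larger-SW}, where the visible interval on the target is bounded below by the distance between the parallel lines through those two vertices). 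A pocket mouth is a window through \emph{one} reflex vertex and has no a priori relation to $SW$; nothing in your argument connects a breakpoint to a configuration of two opposite-side reflex vertices, which is the only thing $SW$ controls. Compounding this, the packing step --- that distinct breakpoints correspond to throats occupying \emph{disjoint} portions of a transversal of length $O(LW)$ --- is the step that actually produces the $O(\mathrm{AR})$ bound, and you leave it as an acknowledged obstacle rather than proving it; the paper avoids this disjointness issue entirely by charging $SW$ to the consecutive visible intervals on $\overline{uv}$ itself rather than to pockets.

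A secondary issue is step (i): you treat as ``standard machinery'' the claim that $\overline{uv}$ admits a (finite) partition into sub-segments $I_1,\dots,I_k$ each fully visible from a \emph{single} point of $\overline{xy}$. Weak visibility only gives that each point of the target is seen by \emph{some} point of the source, and the critical-placement partition of $\overline{xy}$ does not by itself yield such a target-side decomposition (within one combinatorially stable sub-interval the window seen by a single point varies continuously, and \emph{a priori} the guards needed could be infinite --- this is exactly the phenomenon the Bonnet--Miltzow lemmas cited in the paper exhibit, and the reason the AR assumption is introduced). Producing that decomposition is the heart of the problem; the paper does it constructively (the $x_i$/$y_i$ sequences, coverage via Lemma~\ref{lem:covering-target}, termination via Lemma~\ref{lem:j-iteration}), with finiteness then following from Observation~\ref{obs:I-larger-SW}. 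To repair your write-up you would need to (a) tie each breakpoint to an opposite-side pair of blocking reflex vertices so that $SW$ genuinely applies, and (b) prove the disjointness/packing claim, at which point you would essentially have reconstructed the paper's LBV/RBV slicing argument.
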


\begin{proof}
The size of $\mathcal{S}$ is determined by $\text{AR} = LW/SW$:
\begin{enumerate}
    \item The slicing approach (Section~\ref{subsec:slicing-algorithm}) decomposes $\overline{uv}$ into visibility intervals.
    \item By Lemma \ref{lem:covering-target}, the points in $\S$ cover the target.
    \item Observation~\ref{obs:I-larger-SW} establishes that each interval has length $\geq SW$.
    \item Since $\overline{uv}$ has maximum length $LW$, the number of intervals is $\leq \frac{LW}{SW} = \text{AR}$.
\end{enumerate}
Thus $|\mathcal{S}|$ is finite and bounded by $\text{AR}$. 
\end{proof}

In the continue we present the slicing algorithm in Subsection \ref{subsec:slicing-algorithm}, and Subsection \ref{subsec:lemmas} covers the lemmas and observations and their proofs. Section \ref{sec:discussion} provides a final discussion.

\subsection{Slicing Algorithm}
\label{subsec:slicing-algorithm}
This subsection covers the slicing algorithm that splits a given source segment ($\seg{xy}$) by some middle points so that the union visibility of the set of all these points including the endpoints of the source segment covers an entire given target segment ($\seg{uv}$). Without lost of generality, we already suppose that the given source and target segments are weakly visible.

We start the slicing algorithm by defining two specific reflex vertices and their computing approach. 

Since the target is weakly visible from the source, consider the visibility of those points on the source whose view of the target is obstructed by some reflex vertices of $\P$. For each point on the source, its visibility can be blocked by at most two reflex vertices. However, these two reflex vertices may differ for different points on the source. For a precise definition of these reflex vertices, refer to Definition \ref{def:lbv-rbv}.

\begin{definition}
\label{def:lbv-rbv}
Consider two reflex vertices: $\LBV$, denoting the Left Blocking Vertex, and $\RBV$, representing the Right Blocking Vertex. These reflex vertices are defined with respect to a specific point on the source. For a point $q$ on the source, imagine standing at $q$, positioned between $x$ and $y$, while observing $\seg{uv}$. Assume that $x$ lies to the left and $y$ lies to the right of $q$. There exists a \textbf{single} reflex vertex on each side of $q$ such that $\LBV_{q}$ and $\RBV_{q}$ are uniquely determined by $q$ (see Observation~\ref{obs:lbvrbvunique}).

$\LBV_{q}$ (if it exists) is the reflex vertex where the line segment $\seg{q\LBV_{q}}$ intersects with $\seg{uv}$ and lies entirely inside $\P$, passes through at least one reflex vertex ($\LBV_{q}$), and has the exterior of $\P$ on the left side of $\seg{q\LBV_{q}}$.
If multiple reflex vertices lie on a single line crossing $\seg{q\LBV_{q}}$, the closest reflex vertex to $q$ along that line defines $\LBV_{q}$.

The same strategy defines $\RBV_{q}$, except that the exterior of $\P$ lies on the right side of $\seg{q\RBV_{q}}$.
\end{definition}

\subsubsection{Computing $\LBV_{q}$ and $\RBV_{q}$ for a point $q$ on $\seg{xy}$}:
\label{alg:computing-lbv-rbv}

We already know that $\seg{xy}$ and $\seg{uv}$ are weakly visible. Consider the line $\seg{qu}$ and run a sweeping algorithm on the reflex vertices of $\P$ to obtain a line that meets the requirements of Definition \ref{def:lbv-rbv}. That is a line that lies on at least one reflex vertex passing $\seg{uv}$ and holds other reflex vertices of $\P$ on its left side. Note that if multiple reflex vertices lie on this line, the closest reflex vertex to $q$ along that line defines $\LBV_{q}$.

Using the same sweeping algorithm on the other side with an opposite direction obtains $\RBV_{q}$.
\qed 
\subsubsection{Computing points on $\seg{xy}$}:
\label{alg:main}
Denoting $x$ as $x_{0}$ and $y$ as $y_{0}$, we will perform the iterations described below to compute a sequence of points $x_{i}$ and $y_{i}$ on $\seg{xy}$. This process continues until an iteration $j \geq 1$ is reached where $x_{j}$ lies to the right of $y_{j}$. We will demonstrate that, assuming $\P$ has a bounded line aspect ratio, the number of iterations has a polynomial upper bound (Lemma \ref{lem:finite-poly}). Furthermore, when $x_{j}$ lies to the right of $y_{j}$, the target will be covered by the set of points $\{x_{i}, y_{i} \mid 0 \leq i \leq j\}$ (denoted as $\mathcal{S}$) (see Lemma~\ref{lem:covering-target}).

\paragraph{Iterations of the algorithm after computing $\LBV$ and $\RBV$ vertices}:

 Consider two lines: the line intersecting $x_{i}$ and $\LBV_{x_{i}}$, and the line intersection $y_{i}$ and $\RBV_{Y_{i}}$, $i\geq 0$. 
 
 The line crossing $\seg{x_{i}\LBV{x_{i}}}$ intersect the target on $t_{x_{i}}$. The line crossing $\seg{y_{i}\RBV_{y_{i}}}$ intersects the target on $t_{y_{i}}$. 
 
 In each iteration $0 \leq i < j $, compute $t_{x_{i}}$ and $t_{y_{i}}$. 
 
 Consider $t_{x_{i}}$/$t_{y_{i}}$ as a middle point on $\seg{uv}$ where $v$ places at the left side of $t_{x_{i}}$. Assuming $\seg{uv}$ as the source compute $\LBV$ and $\RBV$ vertices for $t_{x_{i}}$ and $t_{y_{i}}$ points.
 
 Draw the line crossing $t_{x_{i}}$ and $\LBV_{t_{x_{i}}}$. The intersection of this line with $\seg{xy}$ creates a point denoted as $x_{i+1}$. 
 
  Draw the line crossing $t_{y_{i}}$ and $\RBV_{y_{i}}$. The intersection of this line with $\seg{xy}$ creates a point denoted as $y_{i+1}$.

If $x_{i+1}$ lies to the left of $y_{i+1}$, set $i = i+1$ and repeat the iteration procedure. Otherwise (that $x_{i+1}$ lies to the right of (or if they lie on one point) $y_{i+1}$), we reach the $j$th iteration and the slicing algorithm stops since the target is covered (Lemma \ref{lem:j-iteration} reveals that in the $j^{th}$ iteration the target gets covered successfully).

 In case one of the points $\LBV$, $\RBV$ does not exist, the corresponding lines do not exist as well. If it is an $x$ or $y$ points the point in that iteration can see the rest of the target. If it is a $t$ point it can see rest of the source so the next point on the next point on the source is $x$ or $y$ itself. So, the algorithm has already reached a position where the points can see the entire target and the slicing algorithm terminates. 
 
 \setlength{\textfloatsep}{0pt}
\setlength{\intextsep}{0pt}
 \begin{figure}[htp]
\begin{center}
 \includegraphics[scale=0.5]{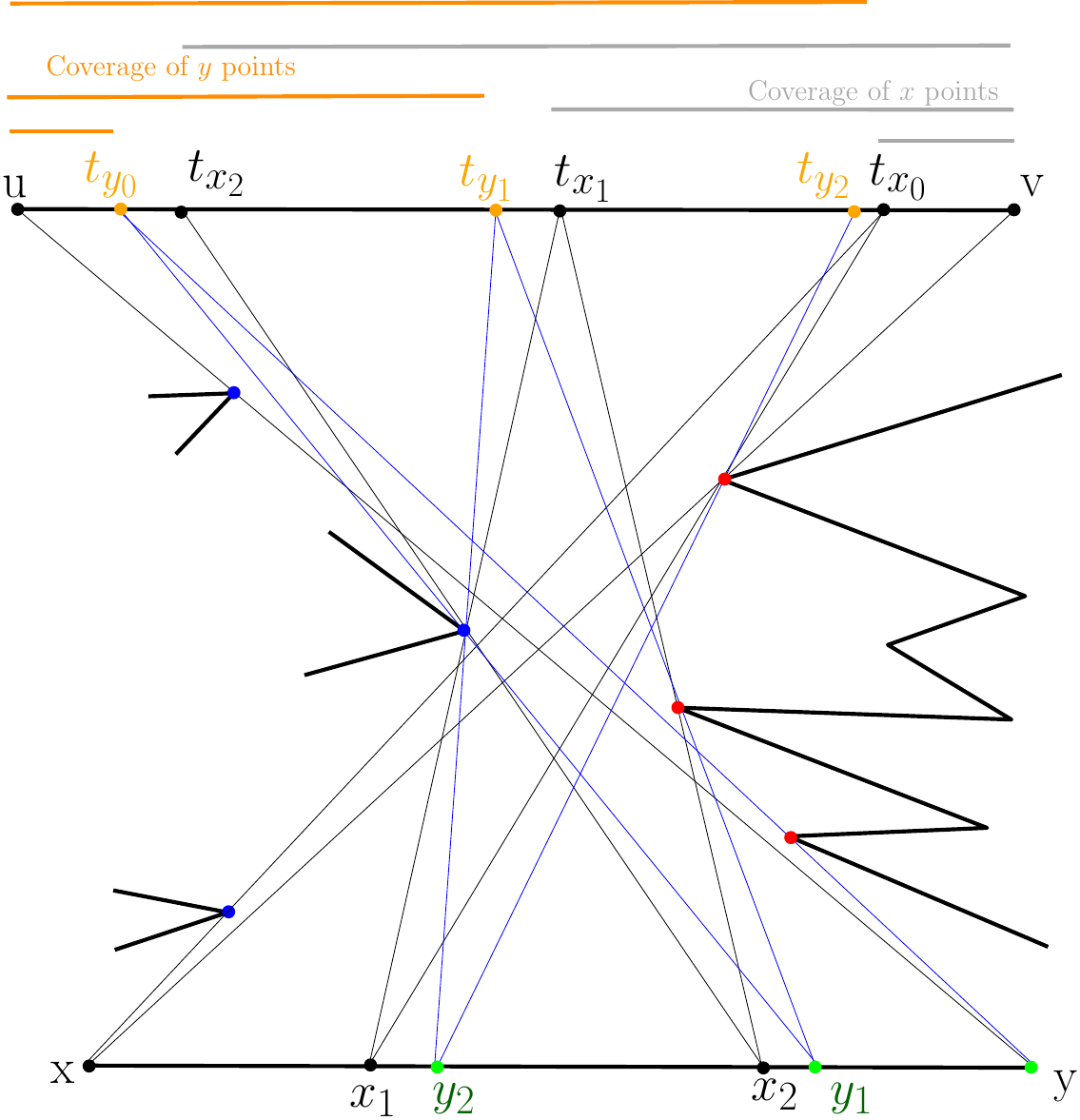}
\caption{
The coverage from both the left and right sides is demonstrated for each iteration. In the second iteration ($j=2$), the slicing algorithm concludes, having successfully covered the entire target segment.
 }
 \label{fig:j-iteration-coverage}
\end{center}
 \end{figure}

Figure \ref{fig:j-iteration-coverage} provides an example that illustrates the iterative process of the slicing algorithm.

\textit{End of the iteration.}

The set of all $x_{i}$ and $y_{i}$ points determines $\mathcal{S}$. 
 
\textbf{End of the slicing algorithm.}

\subsubsection{Results of the slicing algorithm}

Lemma \ref{lem:covering-target} indicates that the set $\S$ obtained by the slicing algorithm covers the target. Lemma \ref{lem:finite-poly} we know that under the cases of Assumption \ref{assumme:AR1} $|\S|$ remains polynomial in $n$.

\subsection{Observations, Lemmas, Theorems, and their proofs}
\label{subsec:lemmas}

\begin{obs}
\label{obs:ass2-infinite}
Given two segments a source $\seg{xy}$ and $\seg{uv}$ inside a simple polygon $\P$, Assumption \ref{assume:AR2} cannot guarantee a finite set $\S$ of points on the source to cover the target.
\end{obs}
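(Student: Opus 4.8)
The plan is to exhibit, for every integer $c$, a simple polygon $\mathcal{P}_c$ together with a source segment $\seg{xy}$ and target segment $\seg{uv}$ inside it such that (i) the disk aspect ratio $\text{AR}_{\text{disk}} = LD/SD$ of $\mathcal{P}_c$ is bounded by a constant independent of $c$, yet (ii) no finite guard set on $\seg{xy}$ covers $\seg{uv}$. In other words, a bounded disk aspect ratio imposes no real restriction on the ``comb-like'' structure that drives the Bonnet--Miltzow hardness, so the infiniteness phenomenon of Lemma~4 and Lemma~5 of \cite{cite:EdouardTilmann} survives intact. The cleanest route is to start from the polygon families $\mathcal{P}_c$ used in Lemma~5 of \cite{cite:EdouardTilmann}, which already guarantee property (ii), and then argue that their disk aspect ratio stays $O(1)$ --- or, if it does not, to rescale/truncate the construction so that the smallest enclosing circle and the largest inscribed circle differ only by a constant factor while the obstructing reflex-vertex pattern near $\seg{uv}$ is untouched.

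The key steps, in order, would be: first, recall the construction of \cite{cite:EdouardTilmann} and isolate the local ``nozzle'' gadget near $\seg{uv}$ whose visibility region cannot be covered by any bounded number of points --- this is the feature responsible for forcing $|\mathcal{S}| = \infty$, and it depends only on a fine arrangement of reflex vertices, not on any global width parameter. Second, embed this gadget (possibly several scaled copies, one per ``level'' of the hardness recursion) inside a bounding region whose shape is close to a disk: e.g. place the whole construction inside an annular or roughly circular frame so that the largest inscribed circle has radius $\Theta(r)$ and the smallest enclosing circle has radius $\Theta(r)$ for the same $r$, giving $\text{AR}_{\text{disk}} = O(1)$. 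Third, verify that this reshaping does not create new visibility sightlines that would let a finite set on $\seg{xy}$ see all of $\seg{uv}$; this is a routine but careful check that the added ``padding'' stays outside the critical sightline cones between $\seg{xy}$ and $\seg{uv}$. Fourth, conclude that $\mathcal{P}_c$ satisfies Assumption~\ref{assume:AR2} with a constant (hence in particular polynomial) disk aspect ratio, yet admits no finite covering $\mathcal{S}$, which is exactly the claimed failure of Assumption~\ref{assume:AR2}.

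The main obstacle I expect is step three together with the tension in step two: the disk aspect ratio being small says the polygon is globally ``fat,'' whereas the hardness gadget is intrinsically ``thin and spiky'' (it relies on narrow reflex corridors). One must show these coexist, i.e. that fatness in the smallest-enclosing-circle / largest-inscribed-circle sense does not propagate down to the local scale where the spikes live. The point is precisely that $\text{AR}_{\text{disk}}$ is insensitive to a bounded number of thin features glued onto a fat body --- adding a narrow spike changes neither the inscribed circle nor (much) the enclosing circle --- whereas $\text{AR}_{\text{line}}$ (Assumption~\ref{assumme:AR1}), which compares the long width against the \emph{minimum} width achieved at reflex vertices, is sensitive to exactly such features and is what rules them out. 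Making this contrast precise, and confirming that one genuinely still needs infinitely many guards in the reshaped polygon (rather than, say, finitely many that exploit the new geometry), is the crux of the argument; the rest is bookkeeping on top of the cited lemmas.
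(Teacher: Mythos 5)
Your proposal follows essentially the same route as the paper: both establish the observation by a counterexample that imports the Bonnet--Miltzow ``no finite set can cover a visibility region'' phenomenon \cite{cite:EdouardTilmann} and then argue that the disk aspect ratio of Assumption~\ref{assume:AR2} is blind to the thin reflex-vertex features responsible for it. The paper's version invokes their Lemma~4 together with a single figure: a point $p$ on the source sees an interval $I_{p}$ on the target, and the reflex vertices are slid along the parallel lines so that $LD/SD$ can be changed at will without altering $SW$, i.e.\ the disk parameters are decoupled from the quantity that actually governs the hard configuration; your version instead isolates the Lemma~5 gadget and pads it into a fat, near-circular frame so that $\text{AR}_{\text{disk}} = O(1)$ while the gadget survives. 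Your logical direction is, if anything, the cleaner one: to refute the guarantee you keep $\text{AR}_{\text{disk}}$ bounded while the obstruction persists, whereas the paper's phrasing (``enlarge $LD/SD$ without changing $SW$'') proves the same decoupling but reads backwards relative to what must be falsified. Both arguments also share the same loose end, which you flag yourself as the crux: the cited lemmas of \cite{cite:EdouardTilmann} concern covering the visibility region of a point by finitely many guards placed anywhere in $\mathcal{P}$, not covering a target segment $\seg{uv}$ by guards restricted to the source $\seg{xy}$, so a complete write-up must verify that the gadget really forces $\mathcal{S}$ to be infinite in the segment-to-segment setting and that the fat padding introduces no helpful new sightlines; the paper's own proof is no more rigorous on these points than your sketch.
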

\begin{proof}

See Fig. \ref{fig:sd-ld-sw}. Based on \textit{Lemma 4 of \cite{cite:EdouardTilmann}} mentioned previously, we cannot find an finite set of points around $p$ (including the sub-segment of the source around $p$) that the union visibility of the points in the set can cover the visibility of $p$. Fig. \ref{fig:sd-ld-sw} illustrates a counter example for Assumption \ref{assume:AR2}, where we can set the ratio of $\frac{LD}{SD}$ to be large enough without modifying the size of $SW$. Still the position of $p$ and $\ell$ can be set so that $p$ sees $I_{p}$ as an interval on the target. For enlarging the ratio $\frac{LD}{SD}$, we can enlarge the minimal circle by taking the reflex vertices away, in fact we can move the reflex vertices on the parallel lines and provide a large polygon without changing SW. So, Assumption \ref{assume:AR2} cannot guarantee of finding a finite set of points on the source to cover the target.

\setlength{\textfloatsep}{5pt}
\setlength{\intextsep}{5pt}
\begin{figure}[tp]
\begin{center}
\includegraphics[scale=0.9]{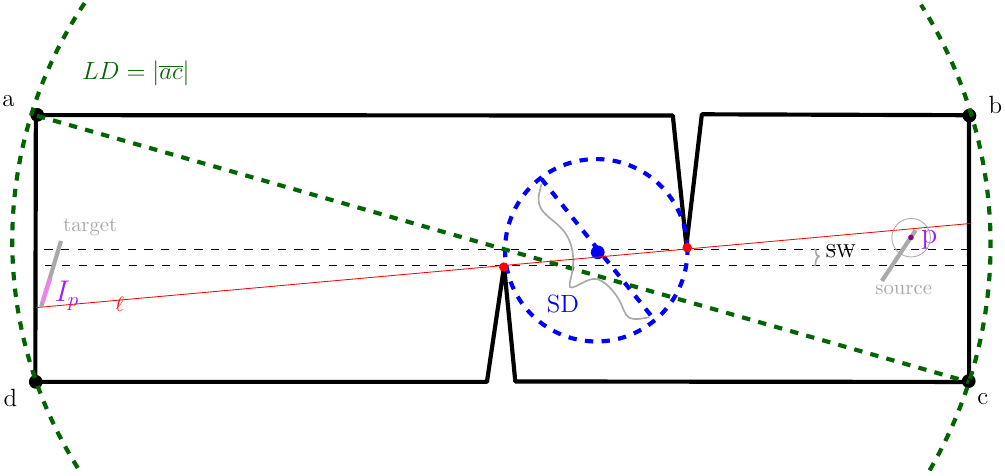}
\caption{SW based on Assumption \ref{assumme:AR1} is more effective that SD of Assumption \ref{assume:AR2}.}
\label{fig:sd-ld-sw}
\end{center}
\end{figure}
\end{proof}

\begin{obs}
\label{obs:lbvrbvunique}
  Given a segment $\overline{xy}$ as the source and a target segment $\seg{uv}$ that is partially visible to $\overline{xy}$, consider a midpoint on the source, denoted as $q$. There exists a unique reflex vertex obstructing the visibility of $q$, denoted by $\LBV_{q}$. This vertex is unique for $q$. Specifically, if we stand on $\overline{xy}$ with $x$ to our left, $\LBV$ blocks the visibility of $q$ from the left side (if it exists). Similarly, $\RBV$ is unique (if it exists) and blocks the visibility of $q$ from the right side of $\overline{xy}$.
\end{obs}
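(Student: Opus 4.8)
The plan is to realize $\LBV_q$ and $\RBV_q$ as the two endpoints of a well-defined angular interval at $q$, so that uniqueness reduces to the fact that an extremum is unique, together with the tie-breaking rule already built into Definition~\ref{def:lbv-rbv}.

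First I would fix $q\in\seg{xy}$ and consider the pencil of rays emanating from $q$. Since $\seg{uv}$ is weakly visible from $\seg{xy}$, there is at least one ray from $q$ that enters $\text{int}(\P)$ and meets $\seg{uv}$ along a segment lying entirely in $\text{int}(\P)$; I order rays by the angle they make, sweeping from the direction of $u$ toward the direction of $v$, and let $\Theta_q$ be the set of angles whose ray reaches $\seg{uv}$ unobstructed. I then take $\theta^{-}=\inf\Theta_q$ (the ``left'' boundary) and $\theta^{+}=\sup\Theta_q$ (the ``right'' boundary); these are well defined even if $\Theta_q$ is disconnected. If the extreme ray in direction $\theta^{-}$ already reaches $u$ unobstructed, then nothing blocks $q$ on the left and $\LBV_q$ does not exist, which is precisely the ``(if it exists)'' clause. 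Otherwise the ray at $\theta^{-}$ is tangent to $\partial\P$ from inside, and the standard structure of visibility regions in a simple polygon shows that a blocked-to-visible transition along a rotating ray occurs exactly when the ray grazes a reflex vertex $r$, with $\text{int}(\P)$ on the right of $\seg{qr}$ and the exterior of $\P$ on its left near $r$; I set $\LBV_q:=r$, breaking ties along a common line by the reflex vertex nearest $q$, exactly as in Definition~\ref{def:lbv-rbv}. The symmetric construction at $\theta^{+}$ yields $\RBV_q$.

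For uniqueness, suppose $r_1\neq r_2$ both satisfy the left-side conditions of Definition~\ref{def:lbv-rbv} for $q$. Each $\seg{q r_i}$ extends inside $\P$ to a point $t_i\in\seg{uv}$, and each has the exterior of $\P$ immediately on its left at $r_i$. If $q,r_1,r_2$ are collinear the two rays coincide, and the ``closest reflex vertex to $q$'' rule selects a single vertex, so $r_1=r_2$, a contradiction. If the rays point in distinct directions, one of them, say $\seg{qt_1}$, lies strictly to the left of the other; the region $R$ bounded by $\seg{qt_1}$, $\seg{qt_2}$ and the sub-segment of $\seg{uv}$ between $t_1$ and $t_2$ is a topological disk, and since $\P$ is simple and both side-segments lie in $\text{int}(\P)$, one shows $R\subseteq\P$; but the exterior pocket sitting immediately to the left of $\seg{q r_2}$ would then have to lie inside $R$, hence inside $\P$, which is impossible. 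Therefore $r_1=r_2$, and $\LBV_q$ is unique. The argument for $\RBV_q$ is identical with ``left'' and ``right'' interchanged.

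I expect the main obstacle to be that last step: proving that the curvilinear triangle $R$ contains no part of $\partial\P$ and hence lies inside $\P$. This requires the simplicity of $\P$ together with the hypotheses that the full segments $\seg{q t_1}$ and $\seg{q t_2}$ lie in $\text{int}(\P)$ and that $\seg{uv}\subseteq\P$; one must rule out a boundary chain of $\P$ entering $R$ through the target side and looping back, which is where the careful topological bookkeeping sits. The remaining ingredients — the angular-sweep characterization, the reflex-vertex tangency at the extreme ray, and the tie-break for collinear vertices — are routine and can be stated briefly.
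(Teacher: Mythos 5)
Your proposal is correct, and it is in fact a more complete argument than the one the paper gives. The paper's own proof is a terse extremality remark: it supposes a second blocking reflex vertex $lrv$ to the left of $\seg{q\LBV_{q}}$ and asserts that the line through $q$ and $lrv$ must keep $\LBV_{q}$ on its left, ``otherwise $lrv$ defines $\LBV_{q}$ itself,'' i.e.\ uniqueness is attributed to the definition picking the angularly extreme grazing vertex, with the tie-break handling collinear vertices. Your route differs in two ways: you first realize $\LBV_{q}$/$\RBV_{q}$ explicitly as the grazing vertices of the extreme rays of the angular visibility set $\Theta_q$ (which also cleanly explains the ``if it exists'' clause), and then you prove the substantive geometric fact the paper only gestures at — that two distinct vertices cannot both satisfy the ``exterior locally on the left'' condition — via the trapped-pocket argument: the exterior wedge at the vertex on the right-hand ray would have to sit inside the region $R$ bounded by the two rays and the piece of $\seg{uv}$, hence inside $\P$. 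What your approach buys is an actual proof that the defining conditions are mutually exclusive for distinct ray directions; what the paper's buys is brevity, at the cost of being essentially an appeal to the definition.

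One note on the step you flag as the main obstacle: it is easier than you fear. You do not need to exclude boundary chains of $\P$ entering $R$; you only need that no point of the exterior $E=\mathbb{R}^2\setminus\P$ lies in the interior of $R$. Since $\partial R$ consists of the two segments $\seg{qt_1},\seg{qt_2}$ and a sub-segment of $\seg{uv}$, all contained in $\P$, and since $E$ is open, connected and unbounded for a simple polygon, any point of $E$ in the interior of $R$ would admit a path to infinity inside $E$ that must cross $\partial R\subseteq\P$ — a contradiction. That one-line connectedness argument, together with your observation that the exterior wedge at the right-hand reflex vertex has positive angle and hence contains points strictly on the $R$-side of the ray arbitrarily close to that vertex, closes the proof. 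The only remaining caveats are degenerate configurations (e.g.\ a blocking reflex vertex lying exactly on $\seg{uv}$, or a ray collinear with $\seg{uv}$), which the paper ignores as well and which do not affect the conclusion in general position.
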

\begin{proof}
We have to prove that $\LBV_{q}$ and $\RBV_{q}$ are unique reflex vertices on each side for a specific point $q$ on $\seg{xy}$. Suppose considering the condition of the lemma both of these reflex vertices exist.

 Without lost of generality consider $\LBV_{q}$. On the contrary, suppose it is not unique. For proof, let's consider another reflex vertex denoted by $lrv \neq \LBV_{q}$, which could potentially obstruct the visibility of $q$ (a point on $\seg{xy}$) not to see some part of the target from the left side.

    See Fig. \ref{fig:lbvrbv}. To begin, we show that the visibility of $q$ cannot be obstructed by any other reflex vertex aside from $\LBV_{q}$.
    Suppose, for the sake of contradiction, that there exists a reflex vertex $lrv$ on the left side of $\overline{q\LBV_{q}}$, which obstructs the visibility of $q$, preventing it from seeing a portion of the target.
    In such a scenario, the line crossing $\overline{q lrv}$ should holds $\LBV$ on its left side. Otherwise, $lrv$ defines $\LBV_{q}$ itself.

    The same analysis reveals that $\RBV$ is unique for $q$ on the other side.

\setlength{\textfloatsep}{0pt}
\setlength{\intextsep}{0pt}
\begin{figure}[htp]
\begin{center}
\includegraphics[scale=0.6]{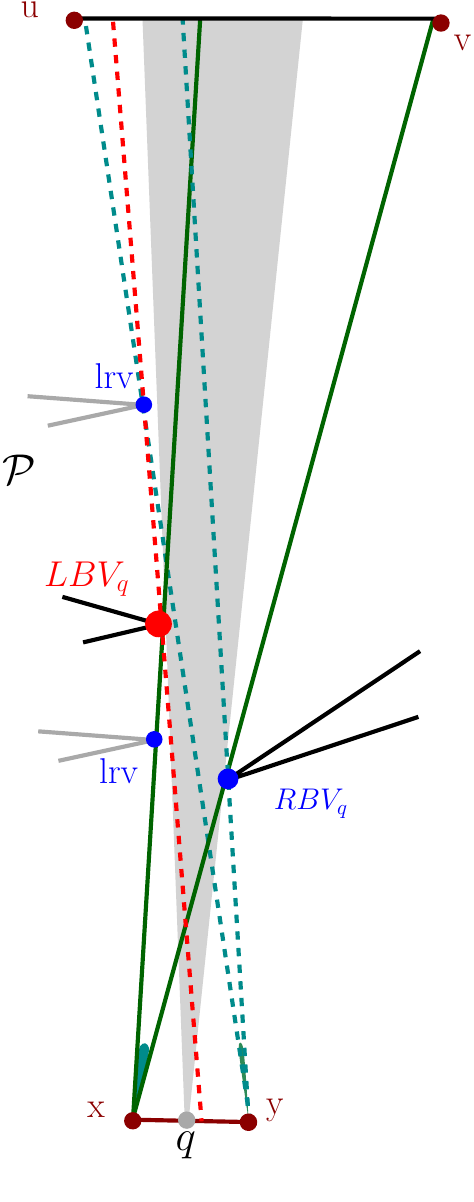}
\caption{
Considering a source and a target and a point $q$ on $\seg{xy}$, $\LBV_{q}$ and $\RBV_{q}$ vertices (if exist) are unique.  
}
\label{fig:lbvrbv}
\end{center}
\end{figure}

\end{proof}

\begin{obs}
\label{obs:I-larger-SW}
Any point $p$ on the source sees an interval $I_{p}$ which $I_{p} < SW$, where $SW$ comes from Assumption \ref{assumme:AR1}. 
\end{obs}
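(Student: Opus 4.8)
The plan is to characterise the visible interval $I_p$ exactly and then bound its length from above by $SW$ using the two blocking reflex vertices supplied by Observation~\ref{obs:lbvrbvunique}. First I would fix a point $p$ on $\seg{xy}$ and let $\LBV_p$ and $\RBV_p$ be its (unique) left and right blocking reflex vertices. Extending the two rays $\seg{p\LBV_p}$ and $\seg{p\RBV_p}$ until they strike the target yields two points $a$ and $b$, and the standard weak-visibility argument shows that a target point $t$ is seen by $p$ precisely when $t$ lies on the sub-segment delimited by $a$ and $b$; hence $I_p=\seg{ab}$ and its length is $\|a-b\|$. Reducing the claim to the single inequality $\|a-b\| < SW$ is the first concrete step.

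Next I would relate $\|a-b\|$ to the separation of the two blocking vertices. The rays $\seg{pa}$ and $\seg{pb}$ bound the visibility cone of $p$, with $\LBV_p$ and $\RBV_p$ sitting on its two sides, strictly inside $\P$. Taking the pair of lines through $\LBV_p$ and $\RBV_p$ parallel to the target direction produces two parallel chords of $\P$ that are tangent to reflex vertices and clipped by the boundary — exactly the configuration appearing in the definition of $SW$. Since $SW$ is the \emph{minimum} separation over all such parallel tangent pairs, this pinch controls the aperture through which $p$ views the target, and I would transport the aperture width to a bound on $\|a-b\|$ by similar triangles on the triangle $\triangle pab$, together with the observation that $\seg{ab}$ is confined to the slab cut out by the two $SW$-defining lines.

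The hard part will be pinning down the \emph{direction} of this comparison so that it delivers the strict upper bound $I_p < SW$ rather than a lower bound: because $\LBV_p$ and $\RBV_p$ lie between $p$ and the target, the naive similar-triangle scaling must be combined with the boundary constraint in the definition of $SW$ and with the fact that the $SW$-realising segments form the narrowest reflex pinch of $\P$. I would therefore argue that the target chord $\seg{ab}$ cannot leave the slab of width $SW$ determined by that pinch, so that $\|a-b\|$ cannot exceed $SW$, and then use the general-position placement of $\LBV_p,\RBV_p$ in the strict interior of $\P$ (Definition~\ref{def:general_position}) to upgrade the bound to a strict one. Establishing that the obstruction seen by $p$ is genuinely governed by the $SW$-minimising reflex pair — and not by some wider pair of reflex vertices — is the crux, and I expect it to be settled by a minimality/extremality argument over the reflex vertices realising the blockage.
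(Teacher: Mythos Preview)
Your plan is aimed at the wrong inequality. The printed statement ``$I_p < SW$'' is a typo: the paper actually needs, and uses, the lower bound $|I_p|\ge SW$. This is explicit both in the paper's own proof of the observation (``the interval created on the target are \emph{larger} than distance between the parallel lines crossing $\LBV$ and $\RBV$'') and in how the observation is invoked in Theorem~\ref{thm:finite-S} and Lemma~\ref{lem:finite-poly} (``each interval has length $\ge SW$'', hence at most $LW/SW$ intervals). So the whole second half of your outline, where you try to force an \emph{upper} bound $\|a-b\|<SW$ out of a configuration that naturally gives a lower bound, is fighting the geometry rather than using it.

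In fact you already identified the correct argument and discarded it. You write that ``because $\LBV_p$ and $\RBV_p$ lie between $p$ and the target, the naive similar-triangle scaling'' goes the other way. That \emph{is} the paper's proof: the two rays from $p$ through $\LBV_p$ and $\RBV_p$ are divergent, so the chord they cut on $\seg{uv}$ is at least as long as the separation (in the direction of $\seg{uv}$) between the parallel lines through $\LBV_p$ and $\RBV_p$; that separation is by definition at least $SW$, since $SW$ is the minimum over all admissible reflex-tangent parallel pairs. Your attempted workarounds (confining $\seg{ab}$ to a slab of width $SW$, invoking general position to get strictness, arguing that the blockage is governed by the $SW$-minimising pair) are not only unnecessary but cannot succeed, because the inequality you are chasing is false in the intended reading. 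Once you flip the direction, your first paragraph (identifying $I_p$ via $\LBV_p,\RBV_p$) plus the divergent-rays observation is the entire proof.
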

\begin{proof}
A point $p$ moving from $x$ to $y$ on the source, the half-lines through $\seg{p\LBV}$ and $\seg{p\RBV}$ are divergent. So, the interval created on the target are larger than distance between the parallel lines crossing $\LBV$ and $\RBV$. 
\end{proof}

\begin{lemma}
\label{lem:covering-target}
Given two weakly visible segments, $\seg{xy}$ (the source) and $\seg{uv}$ (the target), inside a simple polygon $\P$, the slicing algorithm described in Subsection \ref{subsec:slicing-algorithm} generates a set of points on $\seg{xy}$, denoted by $\mathcal{S}$, that collectively cover the entire target.
\end{lemma}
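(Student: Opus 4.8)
The plan is to show that the points produced by the slicing algorithm, $\mathcal{S} = \{x_i, y_i \mid 0 \le i \le j\}$, cover $\seg{uv}$ by an inductive ``closing-in from both ends'' argument. The key structural fact is that, at iteration $i$, the point $x_i$ sees the portion of the target to the \emph{left} of $t_{x_i}$ (because its only left-side obstruction is $\LBV_{x_i}$, whose defining line hits the target exactly at $t_{x_i}$, and everything counterclockwise of that line toward $u$ is unobstructed on the left), and symmetrically $y_i$ sees the portion of the target to the \emph{right} of $t_{y_i}$. Dually, when we treat $\seg{uv}$ as the source and compute $\LBV_{t_{x_i}}$, the line through $t_{x_i}$ and $\LBV_{t_{x_i}}$ meets $\seg{xy}$ at $x_{i+1}$, and this is precisely the leftmost point of the source that still sees $t_{x_i}$; hence the sub-segment $\seg{x_i x_{i+1}}$ of the source, together with $x_i$ and $x_{i+1}$, has the property that $x_{i+1}$ sees a target point at least as far left as $t_{x_i}$. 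So I would first state and justify these two ``visibility handoff'' claims carefully, invoking Observation~\ref{obs:lbvrbvunique} for uniqueness of the blocking vertices and Definition~\ref{def:lbv-rbv} for the side conditions that pin down which part of the target each point sees.

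Next I would set up the induction on $i$. Let $L_i$ denote the point $t_{x_{i-1}}$ (with $L_0 = u$) and $R_i$ the point $t_{y_{i-1}}$ (with $R_0 = v$); the induction hypothesis is that the sub-target $\seg{u L_i}$ to the left is covered by $\{x_0,\dots,x_i\}$ and the sub-target $\seg{R_i v}$ to the right is covered by $\{y_0,\dots,y_i\}$, while $x_i$ sees at least up to $L_{i+1} = t_{x_i}$ and $y_i$ sees at least down to $R_{i+1} = t_{y_i}$. The inductive step is just the handoff claim above: $x_{i+1}$ is defined so that it sees $t_{x_i} = L_{i+1}$ and, by Observation~\ref{obs:I-larger-SW} and the divergence of the blocking half-lines, the interval $x_{i+1}$ sees extends strictly further left, giving a new left frontier $L_{i+2}$; likewise on the right. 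The process is monotone: $L_i$ moves strictly toward $u$ and $R_i$ strictly toward $v$ as $i$ grows (again using the divergence-of-half-lines observation to guarantee strict progress of at least $SW$ each step), so the left frontier and right frontier approach each other.

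The termination/covering step is where the two frontiers meet: the algorithm stops at iteration $j$ when $x_j$ lies to the right of $y_j$ on $\seg{xy}$. I would argue that this crossing of the source points is equivalent to the left frontier $L_j$ having passed the right frontier $R_j$ on the target, i.e. the as-yet-covered-from-the-left region and the covered-from-the-right region now overlap, so $\seg{u L_j} \cup \seg{R_j v} = \seg{uv}$ (this is essentially the content of Lemma~\ref{lem:j-iteration}, which I may cite). Combining with the induction hypothesis at step $j$, every point of $\seg{uv}$ is seen by some $x_i$ or some $y_i$, hence $\overline{uv} \subseteq \bigcup_{s\in\mathcal{S}} \VP(s)$. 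I would also dispose of the degenerate cases flagged in the algorithm description: if some $\LBV$ or $\RBV$ fails to exist at iteration $i$, then the corresponding $x_i$ (or $y_i$) already sees the whole remaining target (or the dual $t$-point sees the whole remaining source, so the next source point is $y$ or $x$ itself), and the union is complete immediately.

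The main obstacle I anticipate is making the ``visibility handoff'' claims fully rigorous — specifically, proving that $x_i$ sees \emph{all} of $\seg{u\,t_{x_i}}$ and not merely the single point $t_{x_i}$, which requires arguing that no \emph{other} reflex vertex obstructs $x_i$'s view of that sub-segment (this is exactly where Observation~\ref{obs:lbvrbvunique}'s uniqueness is doing the heavy lifting, but one must also handle the interaction between the left-blocker and the right-blocker, i.e. check the seen interval is genuinely an interval with $t_{x_i}$ as one endpoint and the far end of the target, or $t_{y_i}$, as the other). A secondary subtlety is the ``strict progress'' quantification: Observation~\ref{obs:I-larger-SW} gives that a single point's seen interval has length at least $SW$, but here I need that the frontier advances by at least a fixed amount each iteration to ensure the frontiers actually cross in finitely many steps rather than converging without meeting; I would reduce this to the same divergence-of-half-lines fact, noting that the new frontier point $t_{x_{i+1}}$ lies beyond $t_{x_i}$ by at least the width between the parallel lines through the relevant reflex vertices, i.e.\ at least $SW$.
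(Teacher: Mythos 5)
Your proposal is correct and takes essentially the same route as the paper: each $x_i$ (resp.\ $y_i$) sees a target interval bounded by its two blocking lines, consecutive intervals hand off at the $t$-points so coverage advances from each end of $\seg{uv}$, and the two covered portions meet at the $j$-th iteration, with the meeting step delegated to Lemma~\ref{lem:j-iteration} and the missing-blocker cases treated exactly as you describe. Your explicit frontier induction and the $SW$-progress remark are just a more formal packaging of the paper's argument (the only difference is an immaterial labeling choice: the paper orients the $x$-chain as starting its coverage at $v$ rather than $u$).
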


\begin{proof}

Without loss of generality, consider $x_i$. Any point $x_i$ sees the target between two lines: one crossing $\seg{x_i\LBV_{x_i}}$ and the other crossing $\seg{x_i\RBV_{x_i}}$.
Note that the target is weakly visible to the source. Thus, $x = x_0$ must see $v$, and the visibility of the $x$ points progressively aims to cover the target from $v$ to $u$. The reflex vertices that block the visibility of the $x$ series from seeing a part of the target are the $\LBV$ vertices.
In each iteration $i$, the line crossing $\seg{x_i\LBV_{x_i}}$ determines $t_{x_{i+1}}$ (see Fig. \ref{fig:target-sequence}), and $x_{i+1}$ can see the target starting from $t_{x_{i+1}}$. Therefore, the visibility of the $x_i$ points on the target are connected.
Thus, the target is visible to $x_i$ from $t_{x_i}$ to $t_{x_{i+1}}$. This process continues until the iteration stops, either when $x_i$ sees $u$ or when reaching a point $y_i$ where the remaining portion of the target has already been covered by $y$ points from the previous iterations.

\setlength{\textfloatsep}{0pt}
\setlength{\intextsep}{0pt}
\begin{figure}[htp]
\begin{center}
\includegraphics[scale=0.52]{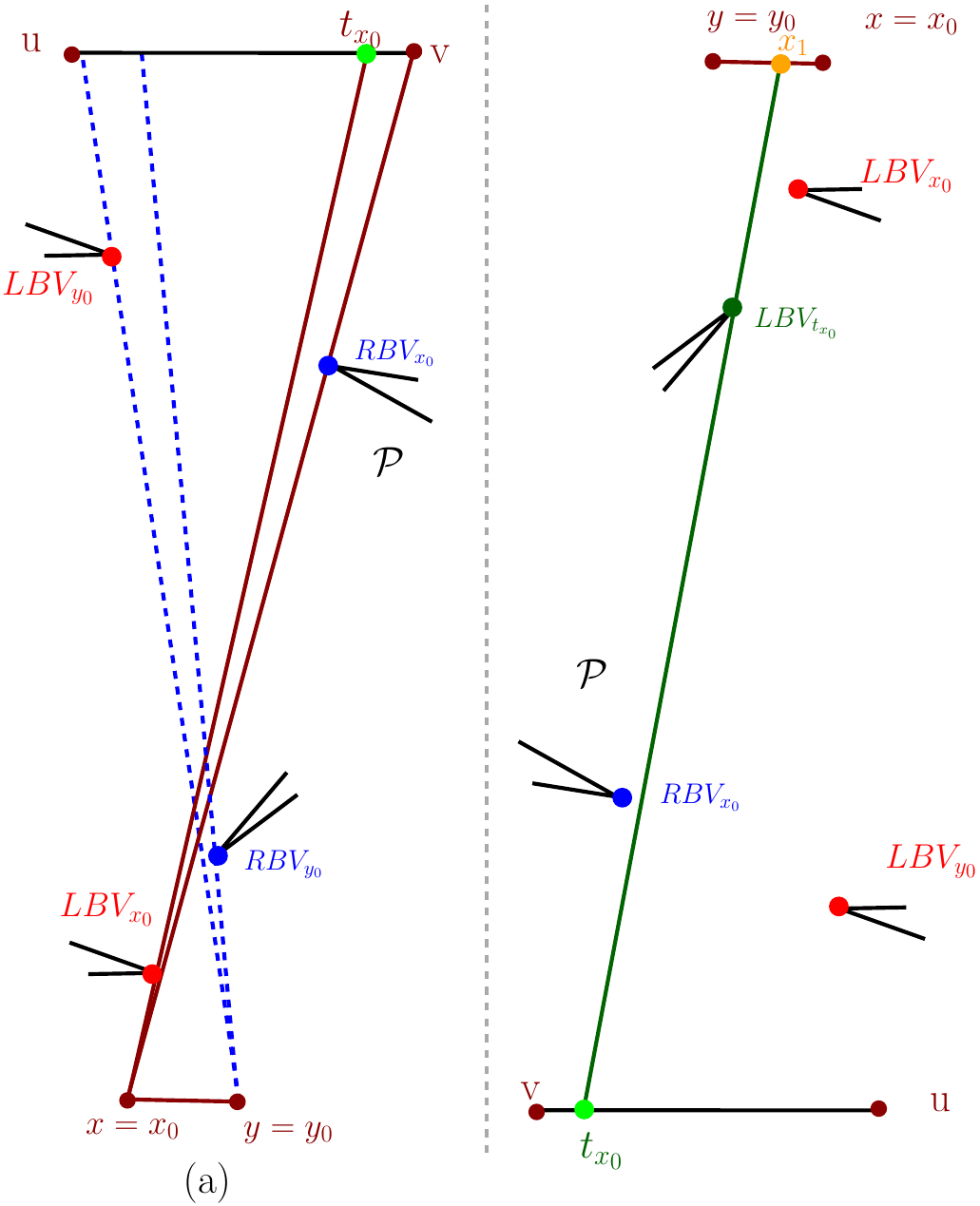}
\caption{
The first iteration of the slicing algorithm. 
}
\label{fig:target-sequence}
\end{center}
\end{figure}

\end{proof}

\begin{lemma}
\label{lem:finite-poly}
The output of the slicing algorithm (as detailed in Subsection \ref{subsec:slicing-algorithm}) produces a finite set of points, $\mathcal{S}$, on the source. Under Assumption \ref{assumme:AR1}, the size of $\mathcal{S}$ is polynomial in $n$, where $n$ represents the complexity of $\P$.
\end{lemma}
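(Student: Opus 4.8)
\textbf{Proof plan for Lemma~\ref{lem:finite-poly}.}
The plan is to establish two facts separately: (i) the slicing algorithm terminates, so $\mathcal{S}$ is \emph{finite}; and (ii) under Assumption~\ref{assumme:AR1} its cardinality is polynomial in $n$. For finiteness, I would argue that each iteration makes genuine progress: by Lemma~\ref{lem:covering-target}, in iteration $i$ the point $x_i$ covers the target sub-interval $[t_{x_i}, t_{x_{i+1}}]$, and symmetrically $y_i$ covers $[t_{y_{i+1}}, t_{y_i}]$ from the other end. So the ``uncovered middle'' of $\seg{uv}$ strictly shrinks at every step from both sides. It remains to show this shrinkage is bounded below, i.e.\ each newly covered piece has positive length bounded away from $0$ — this is exactly where Observation~\ref{obs:I-larger-SW} enters, giving that the interval seen by any source point has length at least $SW > 0$ (strictly positive because $\mathcal{P}$ has nonempty interior and at least one reflex vertex in the weakly-but-not-completely-visible case). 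Hence after finitely many iterations the $x$-front and $y$-front cross, the algorithm halts, and $|\mathcal{S}| = 2(j+1)$ is finite.

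For the polynomial bound I would chain the three ingredients already in the paper exactly as in the proof of Theorem~\ref{thm:finite-S}: each iteration contributes intervals of length $\geq SW$ (Observation~\ref{obs:I-larger-SW}), the target $\seg{uv}$ has length at most $LW$ (since $LW$ is the maximum width of $\mathcal{P}$ between outer parallel tangents, and any chord of $\mathcal{P}$ — in particular $\seg{uv}$ — is no longer than the diameter, which is at most... ), so the number of distinct covering intervals is at most $LW/SW = \text{AR}$; therefore $|\mathcal{S}| \le 2\,\text{AR} + 2 = \mathcal{O}(\text{AR})$. Then I would invoke Assumption~\ref{assumme:AR1}: in the constant-AR case $|\mathcal{S}| = \mathcal{O}(1)$, and in the polynomial-AR case $|\mathcal{S}| = \mathcal{O}(\text{poly}(n))$, which is the claimed bound.

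The subtlety I expect to be the main obstacle — and which the paper's own proof of Theorem~\ref{thm:finite-S} glosses over — is justifying that $\seg{uv}$ has length at most $LW$. The definition of $LW$ is the maximum separation of two parallel lines tangent to convex vertices with $\mathcal{P}$ between them; this is the \emph{width} of $\mathcal{P}$ in the worst direction, which is generally \emph{smaller} than the diameter, not larger. The clean fix is to choose the pair of parallel tangent lines orthogonal to the direction of $\seg{uv}$: $\seg{uv}$ lies inside $\mathcal{P}$, hence between that pair of lines, so $|\seg{uv}|$ is at most the separation of those two particular lines, which is at most $LW$ by maximality. I would write this one-line argument explicitly rather than leave it implicit. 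A second, smaller gap is the ``no two consecutive intervals overlap by less than $SW$'' type of counting: one must make sure the intervals $[t_{x_i},t_{x_{i+1}}]$ are essentially disjoint (they share only endpoints) so that summing their lengths does not overcount; this follows because the $t_{x_i}$ form a monotone sequence along $\seg{uv}$ by the construction in Section~\ref{alg:main}, and similarly for the $t_{y_i}$, and the two fronts meet exactly once. Finally I would note the degenerate cases where $\LBV$ or $\RBV$ fails to exist are already handled by the algorithm description (a single point then sees the whole remaining target), so they only make $|\mathcal{S}|$ smaller and do not affect the bound.
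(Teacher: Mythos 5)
Your proposal is correct and follows essentially the same route as the paper: combine Observation~\ref{obs:I-larger-SW} (each visible interval on the target has length at least $SW$; the ``$<$'' in its statement is a typo, as its proof and the use in Theorem~\ref{thm:finite-S} make clear) with $|\seg{uv}|\le LW$ to bound the number of iterations by $LW/SW=\text{AR}$, hence $|\mathcal{S}|=\mathcal{O}(\text{AR})$, which is constant or polynomial in $n$ under Assumption~\ref{assumme:AR1}. The one step you assert rather than prove --- that consecutive covered intervals are adjacent and non-overlapping, so the lengths can be summed --- is exactly what the paper secures by observing $\RBV_{x_{i+1}}=\LBV_{t_{x_i}}$, while your explicit justification that $|\seg{uv}|\le LW$ (via the pair of parallel tangent lines perpendicular to $\seg{uv}$ and maximality of $LW$) supplies a detail the paper leaves implicit.
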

\begin{proof}
Without loss of generality, we present the proof considering only the points labeled $x$.

First, observe that $\RBV_{x_{i+1}}$ and $\LBV_{t_{x_i}}$ must be the same reflex vertex. If they were different, $x_{i+1}$ would be able to see a point closer to $v$ on the target, and $\LBV_{t_{x_i}}$ would not be obstructing its visibility.

Thus, in each iteration, $x_i$ sees the target between two lines: one crossing $\seg{x_i\LBV_{x_i}}$ and the other crossing $\seg{x_i\RBV_{x_i}}$. From Observation \ref{obs:I-larger-SW}, we know that the number of points on the source obtained by the slicing algorithm is upper bounded by the ratio of $LW$ to $SW$.
\end{proof}

\begin{lemma}
\label{lem:j-iteration}
Given two weakly visible segments, the source $\seg{xy}$ and the target $\seg{uv}$, inside a simple polygon $\P$, the slicing algorithm presented in Subsection \ref{subsec:slicing-algorithm} completes its execution in the $j^{th}$ iteration, where $y_j$ lies to the left of $x_j$. In this iteration, the target segment is guaranteed to be fully covered.
\end{lemma}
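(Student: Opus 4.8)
\textbf{Proof proposal for Lemma~\ref{lem:j-iteration}.}
The plan is to argue that the two monotone sequences $\{x_i\}$ and $\{y_i\}$ produced by the slicing algorithm must ``cross'' after finitely many steps, and that at the moment of crossing the target is entirely covered. First I would record the monotonicity established implicitly by the iteration: the $x$-series marches rightward on $\seg{xy}$ (equivalently, the witnessed portion of the target advances monotonically from $v$ toward $u$), while the $y$-series marches leftward (its witnessed portion advances from $u$ toward $v$); this is exactly the connectedness-of-coverage statement proved in Lemma~\ref{lem:covering-target}, together with the fact that each $x_{i+1}$ sees strictly past $t_{x_{i+1}}$ because $\RBV_{x_{i+1}} = \LBV_{t_{x_i}}$ (the observation opening the proof of Lemma~\ref{lem:finite-poly}). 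Since by Observation~\ref{obs:I-larger-SW} each step advances the covered sub-segment of $\seg{uv}$ by at least $SW > 0$, and $\seg{uv}$ has length at most $LW$, after at most $\lceil LW/SW\rceil = \mathcal{O}(\mathrm{AR})$ iterations the $x$-coverage (growing from $v$) and the $y$-coverage (growing from $u$) must overlap; call $j$ the first such index. By the stopping rule of the algorithm this is precisely the iteration in which $x_j$ fails to lie strictly left of $y_j$, so the algorithm halts exactly here — giving the ``$y_j$ lies to the left of $x_j$'' conclusion.

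Next I would verify that halting at iteration $j$ really does leave no uncovered gap. The $x$-series, by Lemma~\ref{lem:covering-target}, covers a connected sub-segment $\seg{u' v}$ of the target (the portion from $v$ up to the last witnessed point $t_{x_j}$ on the $x$-side), and symmetrically the $y$-series covers a connected sub-segment $\seg{u v'}$ from $u$ up to $t_{y_j}$. I claim $t_{y_j}$ lies on the $v$-side of (or coincides with) $t_{x_j}$, so that $\seg{u' v} \cup \seg{u v'} = \seg{uv}$. The point is that the defining lines cross on $\seg{xy}$: the line through $\seg{x_{j}\LBV_{x_{j}}}$ and the line through $\seg{y_{j}\RBV_{y_{j}}}$ are the lines whose intersections with $\seg{xy}$ are $x_{j}$-type and $y_{j}$-type points in the construction, and ``$x_j$ to the right of $y_j$'' means precisely that the forward-facing half-plane of the $x$-side line already contains the relevant portion of $\seg{uv}$ seen by the $y$-side, forcing the two target intervals to meet or overlap. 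I would make this precise by a short planar-geometry argument: two chords of the (locally convex) visibility region that cross inside $\seg{xy}$ must have their target-endpoints interleaved in the opposite order, which is exactly what is needed.

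A subtlety to handle separately is the degenerate branch flagged in the algorithm description: one of $\LBV$ or $\RBV$ may fail to exist at some iteration. If an $x$- or $y$-point has no blocking vertex on the relevant side, that single point already sees the entire remaining portion of the target toward $u$ (resp.\ $v$), so the union with the previously witnessed portion is all of $\seg{uv}$ and we may declare this the terminating iteration $j$ directly. If a $t$-point has no blocking vertex, the next source point coincides with $x$ or $y$ itself, which again sees the remainder, and the same conclusion follows. I would state this as a preliminary case distinction so that the main argument can assume both blocking vertices exist throughout.

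The main obstacle I expect is the second paragraph — proving that the coverage intervals grown from the two ends genuinely overlap (rather than merely abut or, worse, leave a sliver) at the crossing iteration. Monotonicity and the per-step lower bound $SW$ give termination and the $\mathcal{O}(\mathrm{AR})$ bound cleanly, but the ``no gap remains'' claim depends on the precise geometric relationship between the crossing of the two lines on $\seg{xy}$ and the induced order of their intersections with $\seg{uv}$; getting the orientation bookkeeping right (which side is ``left,'' which reflex vertex is shared between consecutive iterations, and why $x_j$ right of $y_j$ forces $t_{x_j}$ and $t_{y_j}$ to cross) is where the real care is needed, and I would lean on the uniqueness of $\LBV_q,\RBV_q$ (Observation~\ref{obs:lbvrbvunique}) and the identity $\RBV_{x_{i+1}}=\LBV_{t_{x_i}}$ to pin it down.
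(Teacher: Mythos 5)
Your overall strategy mirrors the paper's: at the crossing iteration $j$, the two defining lines (through $\seg{x_j\LBV_{x_j}}$ and through $\seg{y_j\RBV_{y_j}}$) are supposed to force $t_{x_j}$ and $t_{y_j}$ to meet or interleave, so the coverage grown from one end by the $x$-series and from the other end by the $y$-series leaves no gap. However, the step you yourself flag as the crux is exactly where your sketch fails: you justify the interleaving by claiming the two chords ``cross inside $\seg{xy}$,'' but when $x_j \neq y_j$ this is impossible --- each of the two lines meets the source at exactly one point ($x_j$ resp.\ $y_j$), so they cannot cross on $\seg{xy}$. And the crossed order of the source endpoints alone does not force the opposite order on the target: the two lines could a priori intersect beyond the target line, in which case $t_{x_j}$ and $t_{y_j}$ appear in the same order as $x_j,y_j$ and a sliver of $\seg{uv}$ remains uncovered. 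What must be shown (and what the paper's proof supplies) is that the intersection of the two lines lies at or before the target: the paper argues, via the triangles $x_j y_j \LBV_{x_j}$ and $x_j y_j \RBV_{y_j}$, that since $y_j$ still sees part of the target it lies to the left of the line through $\seg{x_j\LBV_{x_j}}$, symmetrically $x_j$ lies to the right of the line through $\seg{y_j\RBV_{y_j}}$, and then uses the orientation of $\LBV_{x_j}$ and $\RBV_{y_j}$ relative to the interior of $\P$ to place the intersection so that the covered intervals on $\seg{uv}$ meet. Your proposal gestures at ``orientation bookkeeping'' and at the identity $\RBV_{x_{i+1}}=\LBV_{t_{x_i}}$, but does not supply this argument, so the central no-gap claim is left unproven.

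Two smaller points. The paper separately treats the degenerate terminating case $x_j = y_j$ (there the two lines do share a source point, and one uses that the $\LBV$ and $\RBV$ of that point are distinct to order $t_{x_j}$ and $t_{y_j}$ correctly); your stopping rule ``fails to lie strictly left'' admits this case, yet your case analysis only covers a missing $\LBV$/$\RBV$, not the coincidence. Also, the counting material in your first paragraph (Observation~\ref{obs:I-larger-SW} and the $LW/SW$ bound) belongs to Lemma~\ref{lem:finite-poly} and Theorem~\ref{thm:finite-S} rather than to this lemma; it is harmless but does not bear on the coverage claim at iteration $j$.
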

\begin{proof}
We only have to prove that $t_{y_{j}}$ lies to the left of (or on) $t_{x_{i}}$. This means that the coverage of the target from the left meets the coverage from the right side.

Without loss of generality, suppose the visibility of both $x_j$ and $y_j$ is obstructed by reflex vertices. Consider the triangle formed by the points $x_j$, $y_j$, and $\LBV_{x_j}$. Since $y_j$ is to the left of $x_j$ and has visibility of some part of the target, $y_j$ must lie to the left of the line passing through $\seg{x_j\LBV_{x_j}}$. Similarly, analyzing the triangle formed by $x_j$, $y_j$, and $\RBV_{y_j}$ reveals that $x_j$ lies to the right of the line passing through $\seg{y_j\RBV_{y_j}}$.
Given that $\LBV_{x_j}$ lies to the left of the interior of $\P$ and $\RBV_{y_j}$ lies to the right of the interior of $\P$, the lines passing through $\seg{x_j\LBV_{x_j}}$ and $\seg{y_j\RBV_{y_j}}$ must intersect, placing their intersection on the target. Consequently, $t_{x_j}$ lies to the left of $t_{y_i}$ on the target.

 \setlength{\textfloatsep}{0pt}
\setlength{\intextsep}{0pt}
 \begin{figure}[htp]
\begin{center}
 \includegraphics[scale=0.4]{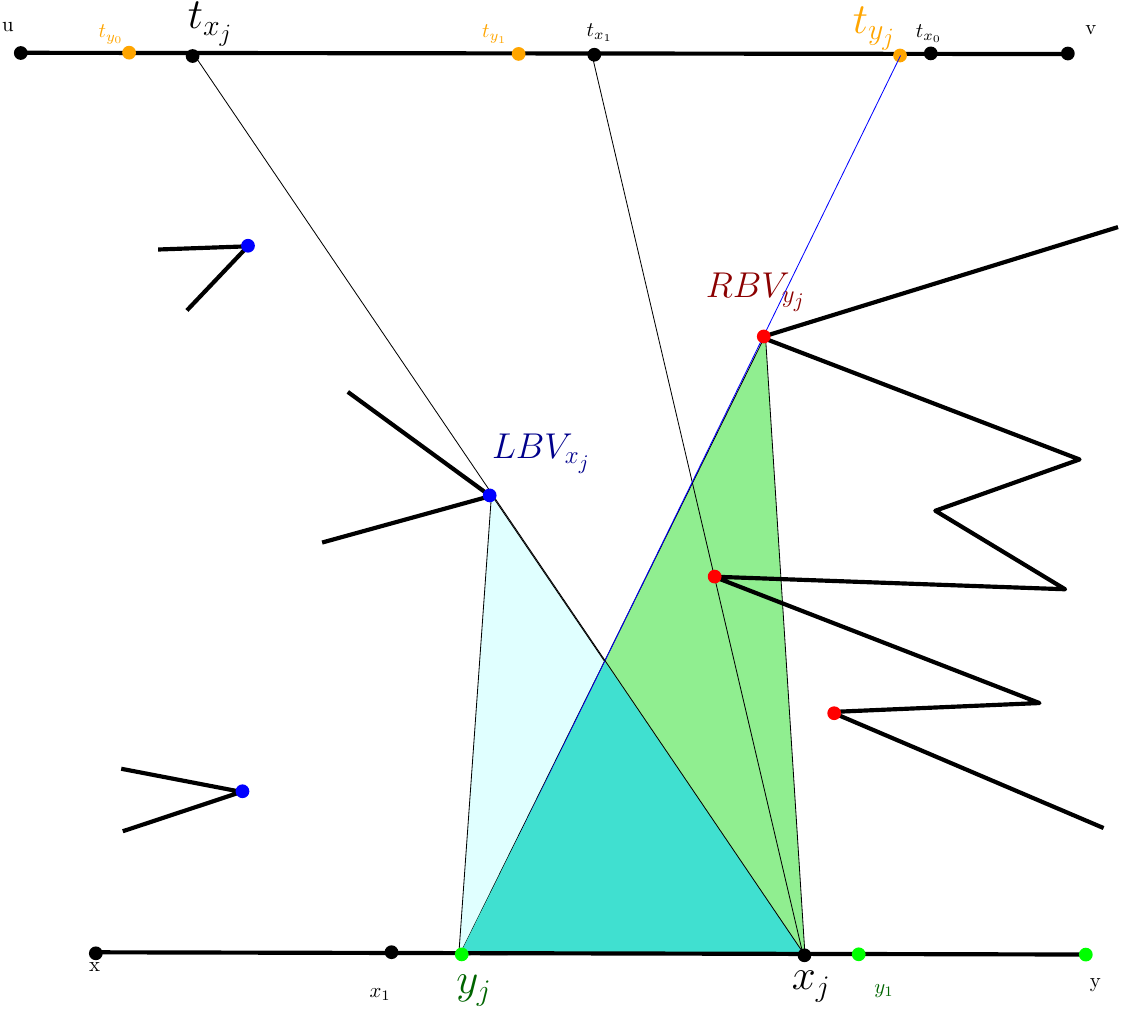}
\caption{
In this example, $j=2$. It demonstrates the $j^{th}$ iteration, where $y_j$ is positioned to the left of $x_j$. 
 }
 \label{fig:j-iteration-coverage}
\end{center}
 \end{figure}
 
In case $x_j$ and $y_j$ coincide at a single point, the $\LBV$ and $\RBV$ of that point are distinct. The lines passing through $x_j$ (or $y_j$) and $\LBV$, and through $x_j$ (or $y_j$) and $\RBV$, intersect the target at different points such that $t_{x_j}$ lies to the left of $t_{y_j}$.
\end{proof}

\begin{theorem}[Line Aspect Ratio Generalizes Vision Stability]
\label{thm:AR-generalizes-BM}
For any simple polygon $\mathcal{P}$ satisfying Bonnet and Miltzow's assumptions:
 General position (no three vertices collinear), and
   Vision stability with constant $\gamma(\mathcal{P})$.
 $\mathcal{P}$ has bounded line aspect ratio: $\text{AR}_{\text{line}} = \frac{LW}{SW} = O(1)$. 
\end{theorem}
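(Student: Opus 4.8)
The plan is to connect the two Bonnet–Miltzow hypotheses to the two quantities $LW$ and $SW$ separately, and then combine. I would first observe that vision stability with a \emph{constant} $\gamma$ is really a statement about how ``thin'' the visibility-blocking features of $\mathcal{P}$ can be: a narrow spike or a near-degenerate reflex notch forces $\mu(\VP(p)\,\triangle\,\VP(q))$ to jump by a large amount over a tiny displacement $\|p-q\|$, which would drive $\gamma$ up. So the first step is to show that constant $\gamma$ gives a lower bound $SW \geq c_1 = c_1(\gamma)$ on the short width — the minimum distance between parallel supporting lines at reflex vertices cannot be arbitrarily small, since an arbitrarily thin ``corridor'' between two reflex chains would create a point $q$ whose visibility region changes by area proportional to the corridor's length (a constant, or $\Omega(1)$ after rescaling) while $q$ moves only by the corridor width. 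General position is used here to rule out the degenerate collinear configurations where this argument would need care.

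Second, I would bound $LW$ from above. Here I would use that $LW$ is the width of the polygon measured by \emph{outer} tangent lines at convex vertices — essentially (up to a constant) the diameter of $\mathcal{P}$. The claim $LW = O(1)$ needs a normalization: since AR is scale-invariant, I may rescale so that, say, $SW$ is normalized, and then I must show $LW$ is bounded in terms of $\gamma$ and $n$ — but the theorem asserts $O(1)$, so really the argument must be that $\mu(\mathcal{P}) \le \gamma \cdot \mathrm{diam}(\mathcal{P})$ type inequalities (taking $p$, $q$ to be the two extreme points realizing the diameter, $\VP(p)\triangle\VP(q)$ can be forced to have area comparable to the whole polygon in the relevant constructions) pin down the ratio. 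Concretely: $\mathrm{diam}(\mathcal{P}) \le \mu(\mathcal{P})/SW \lesssim \gamma\cdot\mathrm{diam}(\mathcal{P})/SW$, and combined with step one this yields $LW/SW = O(1)$.

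The cleanest route is probably: (i) fix $p, q \in \mathcal{P}$ witnessing the narrowest reflex corridor and argue $\mu(\VP(p)\triangle\VP(q)) \gtrsim (\text{corridor length})$ while $\|p-q\| \lesssim SW$, giving $SW \gtrsim (\text{corridor length})/\gamma$; (ii) bound the corridor length below and $LW$ above both in terms of $\mathrm{diam}(\mathcal{P})$ and absolute constants, using that $\mathcal{P}$ is simple and connected so a feature of width $SW$ spanning the polygon has length $\Omega(LW)$; (iii) divide. The identity $\text{AR}_{\text{line}} = LW/SW$ then reads as a bounded-in-$\gamma$ constant.

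The main obstacle I anticipate is step (i): making rigorous the claim that a thin reflex corridor \emph{forces} $\gamma$ to be large. The subtlety is that $\VP(p)\triangle\VP(q)$ could in principle remain small even across a thin feature if the feature is also short, so the argument must exhibit a specific pair $p,q$ straddling the feature that realizes the SW-width *and* sees a large-area wedge beyond it — which requires that the reflex vertices defining $SW$ actually block a substantial region (true because they are reflex and $\mathcal{P}$ is simple, but it needs a careful geometric lemma relating the supporting-line gap to the measure of the blocked pocket). A secondary gap is that the theorem claims $\text{AR}_{\text{line}} = O(1)$ with no explicit dependence on $\gamma$ or $n$, so one should either state the bound as $O(1/\gamma^{?})$ honestly or argue that within the BM construction family $\gamma$ itself is an absolute constant; I would flag this and prove the quantitative version $\text{AR}_{\text{line}} = O(\mathrm{poly}(1/\gamma))$, which is what the downstream results (Theorem~\ref{thm:finite-S}) actually need.
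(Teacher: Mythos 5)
Your skeleton is essentially the paper's: it too applies the vision-stability inequality to a pair $p,q$ placed on the two parallel support lines realizing $SW$, lower-bounds $\mu(\VP(p)\triangle\VP(q))$ by the area of the strip between them (Lemma~\ref{lem:strip-area}, giving $\Omega(d^2)$ with $d=SW$), deduces $SW \ge c_1/\gamma$, and then caps $LW$ by $c_2\cdot\mathrm{perim}(\mathcal{P})$ to conclude $\mathrm{AR}_{\text{line}} \le c_2\,\mathrm{perim}(\mathcal{P})\,\gamma/c_1$. Where you genuinely differ is the key quantitative step: instead of the $\Omega(d^2)$ strip-area bound, you lower-bound the symmetric difference by a quantity tied to the length of the width-$SW$ feature and argue that length is $\Omega(LW)$, which would give the scale-invariant conclusion $\mathrm{AR}_{\text{line}} = O(\gamma)$ directly; the paper's route instead yields a non-scale-invariant lower bound on $SW$ and a final ``constant'' that still contains $\mathrm{perim}(\mathcal{P})$ and $\gamma$, so it is $O(1)$ only if those are treated as absolute constants. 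Your two flagged caveats are exactly the weak points of the paper's own argument: (a) Lemma~\ref{lem:strip-area} only asserts that the strip contains area $\Omega(d^2)$, not that this area lies inside $\VP(p)\triangle\VP(q)$, so the step $c_1 d^2 \le \gamma d$ is precisely the unproved ``blocked pocket'' lemma you anticipate needing (and your own step (ii), that the $SW$-wide feature has length $\Omega(LW)$, is likewise unproved and not obviously true, since the narrowest reflex-to-reflex gap need not span the polygon); and (b) the $O(1)$ claim indeed hides a dependence on $\gamma$ (and, in the paper, on the perimeter). One correction to your closing remark: your own derivation ($SW \gtrsim \text{length}/\gamma$) gives a bound that grows with $\gamma$, since weaker stability (larger $\gamma$) permits thinner corridors; the honest quantitative statement is therefore $\mathrm{AR}_{\text{line}} = O(\gamma)$ (or $O(\mathrm{poly}(\gamma))$), not $O(\mathrm{poly}(1/\gamma))$.
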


\begin{proof}
\textbf{Vision Stability $\Rightarrow$ Bounded AR}\\ 
Let $\mathcal{P}$ satisfy vision stability with constant $\gamma$. Consider parallel support lines $L_1, L_2$ tangent to reflex vertices $v_i, v_j$ at distance $d = SW$. Place guards $p \in L_1 \cap \partial\mathcal{P}$, $q \in L_2 \cap \partial\mathcal{P}$. 

By vision stability:
\begin{equation}
\mu(\mathcal{V}(p) \triangle \mathcal{V}(q)) \leq \gamma \|p - q\|
\end{equation}
The strip between $L_1$ and $L_2$ contains a visibility corridor of area $\Omega(d^2)$ (Lemma~\ref{lem:strip-area}). Thus:
\begin{align}
c_1 d^2 &\leq \gamma d \\
d &\geq \frac{c_1}{\gamma} > 0
\end{align}
where $c_1 > 0$ is a geometric constant. Since $LW \leq \text{diam}(\mathcal{P}) \leq c_2 \cdot \text{perim}(\mathcal{P})$ for fixed $c_2$:
\begin{equation}
\text{AR}_{\text{line}} = \frac{LW}{SW} \leq \frac{c_2 \cdot \text{perim}(\mathcal{P}) \cdot \gamma}{c_1} = O(1)
\end{equation}

\end{proof}

\begin{lemma}
\label{lem:strip-area}
For two parallel lines $L_{1}$ and $L_{2}$ at distance $d$ tangent to reflex vertices in a simple polygon $\P$, the strip between them contains a region of area $\Omega(d^{2})$.  
\end{lemma}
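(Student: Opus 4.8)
The plan is to establish the $\Omega(d^2)$ lower bound on the area of a region inside $\P$ lying in the strip $\Sigma$ between the two parallel support lines $L_1$ and $L_2$, where $L_1$ passes through reflex vertex $v_i$ and $L_2$ through reflex vertex $v_j$, with $\mathrm{dist}(L_1,L_2)=d=SW$ and both lines constrained to touch the interior of $\P$ only along a boundary-bounded segment (as in Assumption~\ref{assumme:AR1}). The core geometric fact I would exploit is that since $L_1,L_2$ realize the \emph{short width}, the portion of $\P$ inside $\Sigma$ is a connected ``corridor'' that cannot be arbitrarily thin along its whole extent: it spans the full width $d$ by definition, and it must have some extent in the perpendicular direction as well, because $v_i$ and $v_j$ are genuine reflex vertices of a simple polygon (not isolated points), so a neighborhood of each inside $\P$ contributes positive area, and the corridor joining the two support lines must pass through the interior.

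First I would set up coordinates so that $L_1$ is the line $y=0$ and $L_2$ is the line $y=d$, with the strip $\Sigma=\{0\le y\le d\}$. Since $\overline{uv}$ is weakly visible from $\overline{xy}$ and the relevant reflex vertices block visibility (Definition~\ref{def:lbv-rbv}, Observation~\ref{obs:lbvrbvunique}), I would locate, within $\Sigma\cap\P$, a point $w$ that sees across the strip — concretely, a point on a transversal chord of $\P$ joining a point of $L_1\cap\partial\P$ to a point of $L_2\cap\partial\P$ that lies in $\mathrm{int}(\P)$; such a chord exists because the two support segments are ``constrained by the polygon's boundary'' and the region between them is a corridor of $\P$. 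Second, I would argue that around this chord there is a triangle (or trapezoid) wholly inside $\P$: take the chord of length $\ge d$, and use the fact that at the reflex vertices $v_i,v_j$ the interior angle exceeds $\pi$, so the corridor fans out to both sides of the chord by a width $\Omega(d)$ before hitting $\partial\P$ — otherwise one could rotate/translate a support line to obtain a strip of width $<d$, contradicting minimality of $SW$. Third, combining a transversal extent of $\Omega(d)$ with a longitudinal extent of $\Omega(d)$ yields a region of area $c_1 d^2$ for an absolute constant $c_1>0$.

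The main obstacle I anticipate is making the ``corridor cannot be too thin'' argument fully rigorous: a priori the intersection $\Sigma\cap\P$ could consist of a very short segment near $v_i$ plus a separate short segment near $v_j$ with no connecting region of positive transversal width, which would break the $\Omega(d^2)$ claim. Ruling this out requires invoking precisely how $SW$ is defined in Assumption~\ref{assumme:AR1} — the two parallel segments are tangent to reflex vertices \emph{and} constrained by the boundary, so they bound a genuine sub-polygon (a ``neck'' of $\P$) whose two ends lie on $L_1$ and $L_2$; I would show that the minimality of $d$ forces this neck to have perpendicular cross-sections of length bounded below by a constant fraction of $d$ over a longitudinal range $\Omega(d)$, using a standard rotating-calipers / width-minimization argument (if every perpendicular cross-section were short, a slightly tilted pair of parallel lines would separate $v_i$ from $v_j$ at smaller distance). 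Once that structural claim is in hand, the area estimate is a routine integration of cross-section lengths.

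I would also note two sanity checks to include: (i) the constant $c_1$ depends only on the minimal reflex-angle excess guaranteed by general position, not on $n$, which is what Theorem~\ref{thm:AR-generalizes-BM} needs; and (ii) the bound is used only as a lower bound, so degenerate near-collinear configurations (excluded by general position, Definition~\ref{def:general_position}) need not be handled separately. With those, the chain $c_1 d^2 \le \mu(\mathcal V(p)\triangle \mathcal V(q)) \le \gamma d$ closes the argument in Theorem~\ref{thm:AR-generalizes-BM}.
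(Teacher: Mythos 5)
There is a genuine gap, and it sits exactly where you flag it. The pivotal step of your plan --- that $\mathcal{P}\cap\Sigma$ has perpendicular cross-sections of length $\Omega(d)$ over a longitudinal range $\Omega(d)$ --- \emph{is} the content of the lemma, and neither of the two justifications you sketch survives scrutiny. The ``fanning'' argument fails because an interior angle exceeding $\pi$ at $v_i,v_j$ says nothing about how far $\mathcal{P}$ extends before hitting $\partial\mathcal{P}$: the polygon can be an arbitrarily thin sliver through a reflex vertex, so the corridor need not open up by $\Omega(d)$ on either side of your transversal chord. The tilting/rotating-calipers argument fails because a competitor pair of lines only threatens the minimality of $SW$ if those lines are again tangent to reflex vertices (Assumption~\ref{assumme:AR1}); a thin crossing sliver bounded by long straight edges offers no reflex vertices inside the strip to tangent at, so ``every cross-section is short'' does not by itself produce a narrower admissible pair. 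Note also that your argument, even if completed, proves a statement with the extra hypothesis $d=SW$ (minimality), which is how the lemma is invoked in Theorem~\ref{thm:AR-generalizes-BM} but is not part of the lemma as stated; without some such hypothesis the bound can fail outright (two tangent lines at distance $d$ whose strip meets $\mathcal{P}$ only in a corridor of thickness $\varepsilon\ll d$, giving area about $\varepsilon d$). Finally, your sanity check (i) is off: general position (no three collinear vertices) gives no quantitative ``reflex-angle excess,'' and even a large excess does not bound the local thickness of $\mathcal{P}$, so the constant $c_1$ is not controlled the way you claim.

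For comparison, the paper's own proof takes the same basic route --- polygon spans the strip, hence contains a parallelogram of base $d$ and height $d$ (or a triangle of area $d^2/2$) --- but simply asserts the longitudinal extent $d$ (with an irrelevant appeal to the isoperimetric inequality) rather than deriving it. So you have correctly located the real difficulty that the paper glosses over, but your proposal leaves it as a promissory note with arguments that do not close it; to make the lemma usable you would need to reformulate it with the minimality of $SW$ (or an equivalent structural hypothesis) and then actually prove the thick-corridor claim, which is not routine.
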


\begin{proof}
Let $L_1$ and $L_2$ be two parallel lines at distance $d$, tangent to reflex vertices of a simple polygon $\mathcal{P}$. Since $L_1$ and $L_2$ are tangent, the polygon must touch both lines, ensuring that $\mathcal{P}$ spans the strip between them.

The region of $\mathcal{P}$ within this strip must occupy an area dictated by the strip’s width $d$. The minimal configuration occurs when $\mathcal{P}$ forms a parallelogram spanning the strip, with base and height both equal to $d$, yielding an area of $d^2$. In degenerate cases, the polygon may form a triangular region within the strip, with area $\frac{d^2}{2}$.

Since both cases satisfy the lower bound of $\Omega(d^2)$, the result follows.

To be more precise:
\begin{enumerate}
\item
 The tangent lines create a ``corridor'' of width $d$.  
  \item  By the isoperimetric inequality, the minimal-area shape fitting this corridor is a rectangle or parallelogram.  
  \item  The polygon must occupy at least the area of a parallelogram with:  
   Base = $d$, and Height = $d$ (ensuring area $d^{2}$)  
  \item  In degenerate cases, the region may be a triangle with area $d^{2}/2$, preserving $\Omega(d^{2})$.  
  
\end{enumerate}

\end{proof}

\section{Discussion}
\label{sec:discussion}
Our work addresses a core problem in segment guarding by ensuring finite solutions without relying on restrictive stability assumptions. We introduce the novel concept of the \textit{line aspect ratio} (AR), a geometric parameter quantifying the anisotropy of reflex vertices. This framework guarantees finite guard sets $(\mathcal{S})$ with a size bounded by $\mathcal{O}(\text{AR})$. Our approach accommodates all vision-stable polygons while extending to broader classes of polygons.

\subsection{A few open problems}
\begin{enumerate}
    \item 
 Optimality of $|\mathcal{S}|$: Is \(|\mathcal{S}| = \Theta(\text{AR})\) optimal?  
 \item 
Dynamic Guarding: Can the guard set $\mathcal{S}$ adapt to polygon deformation while maintaining bounded size?
 \item 
 Extension to 3D: How can the AR framework be generalized for guarding problems in three-dimensional environments?
\end{enumerate}

The AR assumption aligns with realistic geometric constraints observed in various domains:
\begin{enumerate}
    \item 
\textit{Architectural Layouts}: Reflex vertices often form corridors with bounded anisotropy, such as in building floor plans \cite{schwartzburg2014}.
\item \textit{Geographic Meshing}: Terrain models exhibit moderate variations in width and feature distribution \cite{puppo1997}. 
\item \textit{Sensor Networks}: Efficient sensor deployment frequently leverages regions with bounded aspect ratios \cite{wang2007}.
\item \textit{Robotic Navigation}: Structured environments simplify visibility reasoning for autonomous systems \cite{lavalle2004}.
\end{enumerate}

This framework bridges theoretical advances and practical applications, providing a robust foundation for future research in visibility and guarding problems

\bibliographystyle{IEEEtran}
\bibliography{IEEEfull,lipics-v2021-sample-article}

\end{document}